\newtheorem{theorem}{Theorem}
\newtheorem{definition}{Definition}
\newtheorem{proposition}{Proposition}
\newtheorem{remark}{Remark}
\begin{document}

\title{Application of maximal monotone operator method for solving Hamilton-Jacobi-Bellman equation arising from optimal portfolio selection problem}

\author{
\name{Cyril Izuchukwu Udeani
and
Daniel \v{S}ev\v{c}ovi\v{c}\thanks{CONTACT D.~\v{S}ev\v{c}ovi\v{c}. Email: sevcovic@fmph.uniba.sk}
}
\affil{
Comenius University in Bratislava, Mlynsk\'a dolina, 84248 Bratislava, Slovakia}
}




\maketitle

\begin{abstract}
In this paper, we investigate a fully nonlinear evolutionary Hamilton-Jacobi-Bellman (HJB) parabolic equation utilizing the monotone operator technique. We consider the HJB equation arising from portfolio optimization selection, where the goal is to maximize the conditional expected value of the terminal utility of the portfolio. The fully nonlinear HJB equation is transformed into a quasilinear parabolic equation using the so-called Riccati transformation method. The transformed parabolic equation can be viewed as the porous media type of equation with source term. Under some assumptions, we obtain that the diffusion function to the quasilinear parabolic equation is globally Lipschitz continuous, which is a crucial requirement for solving the Cauchy problem. We employ Banach's fixed point theorem to obtain the existence and uniqueness of a solution to the general form of the transformed parabolic equation in a suitable Sobolev space in an abstract setting. Some financial applications of the proposed result are presented in one-dimensional space. 
\end{abstract}

\begin{keywords}
Hamilton-Jacobi-Bellman equation, Riccati  transformation, Maximal monotone operator, Dynamic stochastic portfolio optimization 
\end{keywords}
\medskip
\medskip
\noindent AMS-MOS Classification: {35K55, 34E05, 70H20, 91B70, 90C15, 91B16}


\section{Introduction}

We investigate the existence and uniqueness of a solution $\varphi=\varphi(x,\tau)$ to the Cauchy problem for the nonlinear parabolic PDE
\begin{eqnarray}
\label{generalPDE}
&&\partial_{\tau}\varphi -\Delta \alpha(\tau, \varphi) = g_0(\tau, \varphi) + \nabla\cdot \bm{g}_1(\tau, \varphi),
    \\
&&	\varphi(\cdot, 0) =\varphi_{0},
\end{eqnarray}
where  $\tau\in(0,T), x\in\mathbb{R}^d, d\ge 1$. The diffusion function $\alpha=\alpha(x,\tau,\varphi)$ is assumed to be a globally Lipschitz continuous and strictly increasing function in the $\varphi$-variable. An example of such a Lipschitz continuous function $\alpha(x,\tau,\varphi)$ is the value function of the following parametric optimization problem:
\begin{equation}
\alpha(x,\tau,\varphi) = \min_{ {\bm{\theta}} \in \triangle} 
\left(
-\mu(x,t,{\bm{\theta}}) +  \frac{\varphi}{2}\sigma(x,t,{\bm{\theta}})^2\right), \quad \tau\in(0,T), x\in\mathbb{R}^d, \varphi>\varphi_{min}\,,
\label{eq_alpha_general}
\end{equation}
where $\mu,\sigma^2$ are given $C^1$ functions and $\triangle\subset\mathbb{R}^n$ is a compact decision set. Depending on the structure of the decision set $\triangle$, the function $\alpha$ is $C^{1,1}$ smooth if $\triangle$ is a convex set. But it can be only $C^{0,1}$ smooth if $\triangle$ is not connected.

Problems related to nonlinear parabolic equation arise in several mathematical models of applied sciences, such as chemical reactions, population dynamic, economics, and finance, have attracted great attentions. Recently, there are many results about existence, uniqueness, blowing-up, global existence, and other properties of parabolic equations, see for example Wu et al. \cite{Wu et al}. Some of the authors who studied parabolic equations used the method of upper and lower solution, see  Pao and Ruan \cite{PaoRuan}. Following a different approach, we utilize fixed point theorem, Fourier transform, and monotone operator technique with some shift/perturbation, to study the existence and uniqueness of solution to the Cauchy problem for the nonlinear parabolic equation in an abstract settings. 

The motivation for studying nonlinear parabolic equation of the form (\ref{generalPDE}) in one dimensional space (i.e., $d=1$) arises from dynamic stochastic programming. The fully nonlinear Hamilton-Jacobi-Bellman (HJB) equation describing optimal portfolio selection strategy is represented by the following fully nonlinear parabolic equation:
\begin{eqnarray}
&& \partial_t V + \max_{ {\bm{\theta}} \in \triangle} 
\left(
\mu(x,t,{\bm{\theta}})\, \partial_x V 
+ \frac{1}{2} \sigma(x,t,{\bm{\theta}})^2\, \partial_x^2 V \right) = 0\,,  
\label{eq_HJB}
\\
&& V(x,T)=u(x),  \label{init_eq_HJB}
\end{eqnarray}
where $x\in\mathbb{R}, t\in [0,T)$. A solution $V=V(x,t)$ to the parabolic equation (\ref{eq_HJB}) is subject to the terminal condition $V(x,T)=u(x)$. Following the papers by Kilianov\'a and {\v S}ev{\v c}ovi{\v c} \cite{KilianovaSevcovicANZIAM, KilianovaSevcovicKybernetika, KilianovaSevcovicJIAM}, the Hamilton-Jacobi-Bellman equation of the form (\ref{eq_HJB}) arises from  dynamic stochastic programming, where a goal is to maximize the conditional expected value of the terminal utility of the portfolio:
\begin{equation}
\max_{{\bm{\theta}}|_{[0,T)}} \mathbb{E}
\left[u(x_T^{\bm{\theta}})\, \big| \, x_0^{\bm{\theta}}=x_0 \right],
\label{maxproblem}
\end{equation}
on a finite time horizon $[0,T]$. Here, $u: \mathbb{R} \to \mathbb{R}$ is a given increasing terminal utility function and $x_0$ is a given initial state condition of the process $\{x_t^{\bm{\theta}}\}$ at $t=0$. The underlying stochastic process $\{x_t^{{\bm{\theta}}}\}$  with a drift $\mu(x,t,{\bm{\theta}})$ and volatility $\sigma(x,t,{\bm{\theta}})$ is assumed to satisfy the following It\^{o}'s  stochastic differential equation (SDE):
\begin{equation}
\label{process_x}
d x_t^{\bm{\theta}} = \mu(x_t^{\bm{\theta}}, t, {\bm{\theta}}_t) dt + \sigma(x_t^{\bm{\theta}}, t, {\bm{\theta}}_t) dW_t\,,
\end{equation}
where the control process $\{{\bm{\theta}}_t\}$ is adapted to the process $\{x_t\}$. Here, $\{W_t\}$ is the standard one-dimensional Wiener process. We assume the control parameter ${\bm{\theta}}$ belongs to a given compact subset $\triangle$ in $\mathbb{R}^n$. As an example, one can consider a compact convex simplex $\triangle\equiv\mathcal{S}^n = \{{\bm{\theta}} \in \mathbb{R}^n\  |\  {\bm{\theta}} \ge \mathbf{0}, \mathbf{1}^{T} {\bm{\theta}} = 1\} \subset \mathbb{R}^n$, where $\mathbf{1} = (1,\cdots,1)^{T} \in \mathbb{R}^n$. 

If we introduce the value function
\begin{equation}
V(x,t):= \sup_{  {\bm{\theta}}|_{[t,T)}} 
\mathbb{E}\left[u(x_T^{\bm{\theta}}) | x_t^{\bm{\theta}}=x \right].
\end{equation}
Then, following Bertsekas \cite{Bertsekas}, the value function $V=V(x,t)$ satisfies the fully nonlinear Hamilton-Jacobi-Bellman (HJB) parabolic equation (\ref{eq_HJB}) and $V(x,T):=u(x)$. 

Several attempts have been made for solving the HJB equation (\ref{eq_HJB}).  In this paper, we concentrate on the case when the utility function $u$ is increasing, as a consequence, $\partial_x V(x,\tau) >0$. The analysis of solutions to a fully nonlinear parabolic equation modeling the problem of optimal portfolio construction was investigated by Macov{\'a} and {\v{S}}ev{\v{c}}ovi{\v{c}} \cite{MS}. They showed how the problem of optimal stock to bond proportion in the management of a pension fund portfolio could be formulated in terms of the solutions to the HJB equation. Utility maximization problem for an investment-consumption portfolio when the current utility depends on the wealth process - regularity of solutions to the HJB equation was investigated by Federicol et al. \cite{Federico}. They defined a dual problem and treated it by means of dynamic programming, indicating that the viscosity solutions of the associated HJB equation belong to a class of smooth function. Ishimura and \v{S}ev\v{c}ovi\v{c} \cite{IshSev} constructed and analyzed solutions to the class of Hamilton-Jacobi-Bellman equation (\ref{eq_HJB}) with range bounds on the optimal response variable. They constructed monotone traveling wave solutions and identified parametric regions for which the traveling wave solutions have positive or negative wave speed. Abe and Ishimura \cite{AI} employed the Riccati transformation method for solving the full nonlinear HJB equations. The so-called Riccati transformation was later studied and generalized by Kilianov\'a and \v{S}ev\v{c}ovi\v{c} \cite{KilianovaSevcovicANZIAM}. They investigated solutions of a fully nonlinear HJB equation for a constrained dynamic stochastic optimal allocation problem. However, no attempt has been made in solving the fully nonlinear Hamilton-Jacobi-Bellman parabolic equation arising in portfolio optimization in a suitable Sobolev space using the monotone operator technique. The monotone operator method is essential because it does not only give constructive proof for existence theorems, but it also leads to various comparison results, which are effective tools for studying qualitative properties of solutions.

In this paper, inspired by the above studies, we investigate the existence and uniqueness of a solution to the Cauchy problem for the nonlinear parabolic PDE (\ref{generalPDE}) in suitable Sobolev spaces using the monotone operator approach. Employing the so-called Riccati transformation with some shift in the underlying operator, the HJB equation (\ref{eq_HJB}) can be transformed to the Cauchy problem for the nonlinear PDE (\ref{generalPDE}). We first show that the underlying abstract operator to the proposed Cauchy problem is strongly monotone in a suitable Sobolev space. Employing the monotonicity of the operator, Banach fixed point theorem, and Fourier transform approach, we obtain the existence and uniqueness of a solution to the Cauchy problem (\ref{generalPDE}) in an abstract setting.

The remainder of the paper is organized as follows. In order to motivate our study of equation (\ref{generalPDE}), we consider the fully nonlinear HJB partial differential equation  (\ref{eq_HJB}).  In Section 2, we present the existence and uniqueness result, in an abstract setting, of a solution to the Cauchy problem (\ref{generalPDE}) in Theorem \ref{th:alpha-existence}. The proof of the proposed theorem in a suitable Sobolev space is based on the monotone operator argument with the combination of Banach's fixed point and Fourier transforms techniques. Furthermore, we deduce smoothness properties of solutions. In Section 3, we introduce the so-called Riccati transformation of the HJB parabolic equation and illustrate its application to optimal portfolio selection problem. We investigate the relationship between the fully nonlinear HJB equation and transformed quasilinear Cauchy equation. The transformed function can be interpreted as the coefficient of the relative risk aversion of an investor. We also present the properties of the value function as a diffusion function. The aim is to show that the value function is globally Lipschitz continuous and strictly increasing, which are crucial requirements for obtaining the existence and uniqueness of a solution to the transformed Cauchy equation. The point-wise a-priori estimates of solutions, their existence and uniqueness are investigated. In Section 4, we present some numerical examples to illustrate the proposed result in the one-dimensional space. Finally, Section 5 contains the conclusion.

\section{Existence and uniqueness of a solution to the Cauchy problem}

We begin with the definition of the function spaces we will work with. Let $V\hookrightarrow H \hookrightarrow V^\prime$ be a Gelfand triple, where 
\[
H = L^2(\mathbb{R}^d) = \{ f:\mathbb{R}^d\to\mathbb{R}, \Vert f\Vert_{L^2}^2 = \int_{\mathbb{R}^d} |f(x)|^2 dx < \infty \}
\]
is a Hilbert space endowed with the inner product $(f,g)= \int_{\mathbb{R}^d} f(x) g(x) dx$. The Banach spaces $V, V^\prime$ are defined as follows: 
\[
\quad V = H^1(\mathbb{R}^d), \quad V^\prime = H^{-1}(\mathbb{R}^d),
\]
where the Sobolev spaces $H^s(\mathbb{R}^d)$ are defined by means of the Fourier transform
\[
\hat f(\xi) = \frac{1}{(2\pi)^{d/2}} \int_{\mathbb{R}^d} e^{-i x\cdot\xi} f(x) dx, \quad \xi = (\xi_1, \xi_2, ..., \xi_d)^T\in\mathbb{R}^d,
\]
\[
H^s(\mathbb{R}^d) = \{ f:\mathbb{R}^d\to\mathbb{R}, (1+|\xi|^2)^{s/2} \hat f(\xi) \in L^2(\mathbb{R}^d) \}, \; s\in\mathbb{R}
\]
endowed with the norm $\Vert f\Vert_{H^s}^2 = \int_{\mathbb{R}^d} (1+|\xi|^2)^s |\hat f(\xi)|^2 d\xi$, where $ |\xi| = (\xi_1^2+ \cdots + \xi_d^2)^{1/2}$

Let us introduce the linear operator $A : V\to V^\prime$ as follows:
\[
A \psi = \psi  - \Delta \psi, 
\]
Note that $A$ is a self-adjoint operator in the Hilbert space $H=L^2(\mathbb{R}^d)$ 
having the Fourier transform representation:
\[
\widehat{A\psi}(\xi) = (1+|\xi|^2)\hat\psi(\xi).
\]
Furthermore, the fractional powers of $A$ can be defined as follows: $\widehat{A^s\psi}(\xi) = (1+|\xi|^2)^s\hat\psi(\xi), \; s\in \mathbb{R}$. In particular, 
\[
\widehat{A^{\pm1/2}\psi}(\xi) = (1+|\xi|^2)^{\pm1/2}\hat\psi(\xi), \quad \]
and $A^{-1/2}$ is a self-adjoint operator in the Hilbert space $H=L^2(\mathbb{R}^d)$. Moreover, $A^{-1}= A^{-1/2} A^{-1/2}$.

In the sequel, we shall denote the duality pairing between the spaces $ V$ and $ V^\prime$ by $ \langle .,.\rangle$, i.e., the value of a functional $ F\in V^\prime$ at $u\in V$ is denoted by $ \langle F, u \rangle$. We have the following definitions.
\begin{definition}
\cite{Barbu}  An operator (in general nonlinear) $ B: V\to V^\prime $ is said to be
\begin{itemize}
    \item  [(i)] monotone if $$ \langle B(u) - B(v), u-v\rangle \geq 0, \; \forall \; u,v\in V,$$
    \item [(ii)] strongly monotone if there exists a constant $C>0$ such that
    $$ \langle B(u) - B(v), u-v\rangle \geq C\Vert u-v\Vert_V^2,  \; \forall \; u,v\in V,$$ 
    \item  [(iii)] hemicontinuous if for each $ u, v\in V$, the real-valued function $ t\mapsto B(u+tv)(v) $ is continuous.
\end{itemize}
\end{definition}

\begin{theorem}\cite{Showalter, Barbu}
\label{th:Showalter}
Let $V$ be a separable reflexive Banach space, dense and continuous in a Hilbert space $H$ which is identified with its dual, so $V \hookrightarrow H\hookrightarrow V^\prime$. Let $p\geq 2$ and set $\mathcal{V} = L^p ((0,T); V).$ Assume a family of operators $\mathcal{A}(\tau,.): V\to V^\prime, 0\leq \tau< T$, is given such that
\begin{itemize}
\item [(i)] for each $\varphi \in V $, the function $\mathcal{A}(.,\varphi): [0,T]\to V^\prime $ is measurable,
\item [(ii)] for a.e $\tau\in [0,T]$, the operator $\mathcal{A}(\tau,.): V\to V^\prime$ is monotone, hemicontinuous and bounded by
$\Vert \mathcal{A}(\tau,\varphi)\Vert \leq C ( \Vert \varphi\Vert^{p-1} + k(\tau)), \varphi\in V, 0\leq \tau< T,$ 
where $ k\in L^{p'}(0,T) $, 
\item [(iii)] and there exists  $\lambda >0$ such that 
$\langle \mathcal{A}(\tau,\varphi), \varphi\rangle  \geq \lambda\Vert \varphi\Vert^{p} - k(\tau), \varphi\in V, 0\leq \tau< T.$
\end{itemize}
Then for each $\hat f\in \mathcal{V^\prime}$ and $ \varphi_{0}\in H$, there exists a unique solution $\varphi\in\mathcal{V}$ of the Cauchy problem
\[
\partial_\tau\varphi(\tau) + \mathcal{A}(\tau,\varphi(\tau)) = \hat f(\tau) ~ \text{in}~ \mathcal{V^\prime},\ \  \varphi(0) =\varphi_{0}.
\]
\end{theorem}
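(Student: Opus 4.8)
The plan is to establish this via the classical Galerkin approximation scheme combined with the Minty--Browder monotonicity argument. Since $V$ is separable, I would first fix a countable basis $\{w_1, w_2, \ldots\}$ whose finite linear spans $V_n = \mathrm{span}\{w_1,\ldots,w_n\}$ are dense in $V$. For each $n$ I seek a Galerkin approximation $\varphi_n(\tau) = \sum_{k=1}^n c_{nk}(\tau)\, w_k \in V_n$ solving the projected system $\langle \partial_\tau \varphi_n(\tau) + \mathcal{A}(\tau, \varphi_n(\tau)), w_j\rangle = \langle \hat f(\tau), w_j\rangle$ for $j = 1,\ldots,n$, with $\varphi_n(0) = \varphi_{0n}$ the projection of $\varphi_0$ onto $V_n$. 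This is a system of ordinary differential equations for the coefficients $c_{nk}$; its local solvability follows from Carath\'eodory's theorem, using the measurability of $\mathcal{A}(\cdot,\varphi)$ from hypothesis (i) and the continuity in $\varphi$ on the finite-dimensional space $V_n$ supplied by hemicontinuity.

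Next I would derive the fundamental a-priori estimates. Testing the Galerkin equation with $\varphi_n$ itself and using $\langle\partial_\tau \varphi_n, \varphi_n\rangle = \frac{1}{2}\frac{d}{d\tau}\Vert\varphi_n\Vert_H^2$ together with the coercivity bound (iii), then absorbing the term $\langle \hat f, \varphi_n\rangle$ by Young's inequality, yields uniform bounds of $\varphi_n$ in $L^\infty((0,T);H) \cap \mathcal{V}$; these also extend the local Galerkin solutions to the whole interval $[0,T]$. The growth condition (ii) then bounds $\mathcal{A}(\cdot,\varphi_n)$ in $\mathcal{V}' = L^{p'}((0,T);V')$, whence $\partial_\tau \varphi_n$ is bounded in $\mathcal{V}'$ as well. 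By reflexivity I extract a subsequence with $\varphi_n \rightharpoonup \varphi$ in $\mathcal{V}$, $\partial_\tau \varphi_n \rightharpoonup \partial_\tau \varphi$ in $\mathcal{V}'$, and $\mathcal{A}(\cdot,\varphi_n)\rightharpoonup \chi$ in $\mathcal{V}'$, and passing to the limit in the projected equations gives $\partial_\tau \varphi + \chi = \hat f$ in $\mathcal{V}'$.

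The crux of the argument, and the step I expect to be the main obstacle, is identifying the weak limit as $\chi = \mathcal{A}(\cdot,\varphi)$. I would invoke Minty's trick, starting from monotonicity: $\int_0^T \langle \mathcal{A}(\tau,\varphi_n) - \mathcal{A}(\tau,\psi), \varphi_n - \psi\rangle\,d\tau \ge 0$ for every $\psi \in \mathcal{V}$. The delicate point is the limsup inequality $\limsup_n \int_0^T \langle \mathcal{A}(\tau,\varphi_n),\varphi_n\rangle\,d\tau \le \int_0^T \langle\chi,\varphi\rangle\,d\tau$, which I obtain from the integration-by-parts identity $\int_0^T \langle \partial_\tau \varphi_n, \varphi_n\rangle\,d\tau = \frac{1}{2}\Vert\varphi_n(T)\Vert_H^2 - \frac{1}{2}\Vert\varphi_{0n}\Vert_H^2$ combined with weak lower semicontinuity of $v \mapsto \Vert v(T)\Vert_H^2$. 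Because elements of $\mathcal{V}$ with distributional time derivative in $\mathcal{V}'$ embed continuously into $C([0,T];H)$, the terminal and initial traces are well defined and this passage to the limit is legitimate. Feeding the limsup bound into the monotonicity inequality, then choosing $\psi = \varphi - \epsilon w$, dividing by $\epsilon>0$ and letting $\epsilon\to 0^+$ with the help of hemicontinuity, yields $\langle \chi - \mathcal{A}(\tau,\varphi), w\rangle = 0$ for all $w\in\mathcal{V}$, hence $\chi = \mathcal{A}(\cdot,\varphi)$. The same embedding into $C([0,T];H)$ ensures the attainment of the initial datum $\varphi(0) = \varphi_0$.

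Uniqueness is then straightforward and does not require more than monotonicity. If $\varphi_1,\varphi_2\in\mathcal{V}$ both solve the Cauchy problem for the same $\hat f$ and $\varphi_0$, I subtract the equations and test with $\varphi_1-\varphi_2$, obtaining $\frac{1}{2}\frac{d}{d\tau}\Vert\varphi_1-\varphi_2\Vert_H^2 = -\langle \mathcal{A}(\tau,\varphi_1)-\mathcal{A}(\tau,\varphi_2),\varphi_1-\varphi_2\rangle \le 0$ by monotonicity. Since $\varphi_1(0)=\varphi_2(0)=\varphi_0$, integrating in $\tau$ forces $\Vert\varphi_1(\tau)-\varphi_2(\tau)\Vert_H = 0$ for all $\tau$, so $\varphi_1=\varphi_2$. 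This completes the plan.
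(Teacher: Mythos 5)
The paper does not actually prove this statement: Theorem~\ref{th:Showalter} is imported verbatim from the cited monographs of Showalter and Barbu and used as a black box in the proof of Theorem~\ref{th:alpha-existence}. Your Galerkin plus Minty--Browder plan is precisely the classical proof given in those references, so in substance you have reconstructed the standard argument: coercivity (iii) yields the $L^\infty((0,T);H)\cap\mathcal{V}$ bounds and global solvability of the Galerkin systems, the growth bound (ii) gives weak compactness of $\mathcal{A}(\cdot,\varphi_n)$ in $\mathcal{V}'$, the limsup inequality obtained by integration by parts feeds Minty's trick, and uniqueness follows from monotonicity alone.

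Two technical points in your write-up would, however, need repair before the argument is rigorous at the stated level of generality. First, hemicontinuity by itself does not give continuity of the projected nonlinearity on $V_n$, not even in finite dimensions; you need the standard lemma that a \emph{monotone} hemicontinuous operator is demicontinuous, which on a finite-dimensional space upgrades to continuity (monotonicity is essential here, and you do have it, but you did not invoke it). Second, and more seriously, your claim that $\partial_\tau\varphi_n$ is bounded in $\mathcal{V}'$ does not follow from what precedes it: the Galerkin equation controls only the $V_n$-projection of the time derivative, i.e.\ $\partial_\tau\varphi_n = P_n^*\bigl(\hat f - \mathcal{A}(\cdot,\varphi_n)\bigr)$, and for an arbitrary basis of a separable reflexive Banach space the dual projections $P_n^*$ need not be uniformly bounded. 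The standard remedy, used in the cited references, is to dispense with this bound altogether: pass to the limit in the weak formulation tested against $w_j\,\phi(\tau)$ with $\phi$ smooth, which identifies $\partial_\tau\varphi = \hat f - \chi$ distributionally and shows \emph{a posteriori} that $\partial_\tau\varphi\in\mathcal{V}'$; likewise the convergence $\varphi_n(T)\rightharpoonup\varphi(T)$ in $H$ needed for your lower-semicontinuity step should be extracted from the boundedness of $\varphi_n(T)$ in $H$ and identification of the limit by testing with $\phi(T)\neq 0$, not from a uniform bound in $W=\{v\in\mathcal{V}:\partial_\tau v\in\mathcal{V}'\}$. With these two standard adjustments your proof is complete and coincides with the literature proof that the paper relies on.
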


\medskip
In what follows, we consider the spaces $\mathcal{V} = L^2 ((0,T);V)$, $\mathcal{H} = L^{2}((0,T);H) $  and  $\mathcal{V^\prime} = L^2 ((0,T);V^\prime)$, i.e., $p=2$. So we have the Gelfand triple $\mathcal{V}\hookrightarrow \mathcal{H} \hookrightarrow \mathcal{V^\prime}$, where $ \mathcal{H}$ is a Hilbert space endowed with the norm 
\[
\Vert \varphi\Vert^{2}_{\mathcal{H}} = \int_{0}^{T}\Vert \varphi(\tau)\Vert^{2}_{H}d\tau, \; \forall\varphi\in\mathcal{H}.
\]  

For a given value $\varphi_{min}$, we denote ${\mathcal D}= \mathbb{R}^d\times(0,T)\times (\varphi_{min},\infty)$. 

\begin{theorem}
\label{th:alpha-existence}
Assume that the above settings on $H$ and $V$ hold. Let $ g_0, g_{1j}:[0,T]\times H\to H, j=1,\cdots, n,$ be globally Lipschitz continuous functions. Suppose $\alpha\in C^{0,1}(\mathcal{D})$ is such that there exist constants $\omega, L, L_0>0$ such that $0 < \omega\le  \alpha^\prime_\varphi(x,\tau,\varphi)\le L$, $|\nabla_x \alpha(x,\tau,\varphi)|\le p(x,\tau) + L_0  |\varphi|$, $\alpha(x,\tau,0)=h(x,\tau)$ for  a.e. $(x,\tau,\varphi)\in\mathcal{D}$ and $p, h\in L^{\infty}((0,T);H)$.
Then for any $T>0 \; \text{and} \;\varphi_{0}\in H,$ there exists a unique solution $\varphi\in{\mathcal V}$ of the Cauchy problem 
\begin{equation}
	\partial_{\tau}\varphi + A\alpha(\cdot,\tau, \varphi) = g_0(\tau, \varphi) + \nabla \cdot \bm{g}_1(\tau, \varphi), \qquad 
	\varphi(0) =\varphi_{0}. 
	\label{equ:g_0,g_1}
\end{equation}

\end{theorem}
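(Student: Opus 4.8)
The plan is to recast the semilinear problem \eqref{equ:g_0,g_1} as a fixed-point equation and to solve the resulting frozen-source problem by the abstract monotone-operator result, Theorem \ref{th:Showalter}, with exponent $p=2$. Given $\phi\in\mathcal{H}$, I would freeze the right-hand side and set $\hat f_\phi(\tau):=g_0(\tau,\phi(\tau))+\nabla\cdot\bm{g}_1(\tau,\phi(\tau))$. Since $g_0$ and the components $g_{1j}$ are globally Lipschitz as maps $[0,T]\times H\to H$, one checks that $g_0(\cdot,\phi)\in\mathcal{H}\hookrightarrow\mathcal{V}'$ and $\nabla\cdot\bm{g}_1(\cdot,\phi)\in\mathcal{V}'$, so that $\hat f_\phi\in\mathcal{V}'$. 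The map $\Phi:\phi\mapsto\varphi$, where $\varphi\in\mathcal{V}$ is the unique solution of $\partial_\tau\varphi+A\alpha(\cdot,\tau,\varphi)=\hat f_\phi$, $\varphi(0)=\varphi_0$, is then well defined once Theorem \ref{th:Showalter} applies to $\mathcal{A}(\tau,\varphi):=A\alpha(\cdot,\tau,\varphi)$, and a solution of \eqref{equ:g_0,g_1} is exactly a fixed point of $\Phi$.

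Second, I would verify the hypotheses of Theorem \ref{th:Showalter} for $\mathcal{A}(\tau,\cdot)$. The operator $A=I-\Delta$ acts on the Fourier side as multiplication by $1+|\xi|^2$, hence it is an isometry $V\to V'$ with $\|A\psi\|_{V'}=\|\psi\|_V$; combined with the structure conditions $\alpha(x,\tau,0)=h(x,\tau)$, $0<\omega\le\alpha^\prime_\varphi\le L$ and $|\nabla_x\alpha|\le p+L_0|\varphi|$, this shows $\alpha(\cdot,\tau,\varphi)\in V$ for $\varphi\in V$ and yields the growth bound $\|\mathcal{A}(\tau,\varphi)\|_{V'}=\|\alpha(\cdot,\tau,\varphi)\|_V\le C(\|\varphi\|_V+k(\tau))$ with $p-1=1$. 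Measurability in $\tau$ and hemicontinuity follow from the Carath\'eodory and continuity properties of $\alpha$. For coercivity I would use the identity
\[
\langle A\alpha(\cdot,\tau,\varphi),\varphi\rangle=(\alpha(\cdot,\tau,\varphi),\varphi)_H+(\nabla\alpha(\cdot,\tau,\varphi),\nabla\varphi)_H,
\]
expand $\nabla\alpha=\nabla_x\alpha+\alpha^\prime_\varphi\nabla\varphi$, bound the first term below using $\alpha^\prime_\varphi\ge\omega$, and absorb the cross term $\int\nabla_x\alpha\cdot\nabla\varphi$ by Young's inequality at the cost of the $L^\infty((0,T);H)$ data $p,h$; this gives $\langle\mathcal{A}(\tau,\varphi),\varphi\rangle\ge\lambda\|\varphi\|_V^2-k(\tau)$.

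The decisive point, and the step I expect to be the main obstacle, is the monotonicity required in (ii) of Theorem \ref{th:Showalter}. Writing $w=\varphi_1-\varphi_2$ and introducing the chord slope $\beta(x,\tau)=\int_0^1\alpha^\prime_\varphi(x,\tau,\varphi_2+sw)\,ds\in[\omega,L]$, the fundamental theorem of calculus gives the key linearization $\alpha(\cdot,\tau,\varphi_1)-\alpha(\cdot,\tau,\varphi_2)=\beta\,w$, whence
\[
\langle A\alpha(\cdot,\tau,\varphi_1)-A\alpha(\cdot,\tau,\varphi_2),w\rangle=(\beta w,w)_H+(\nabla(\beta w),\nabla w)_H\ge\omega\|w\|_H^2+\omega\|\nabla w\|_H^2+(w\,\nabla\beta,\nabla w)_H.
\]
The first two terms already furnish $\omega\|w\|_V^2$; the obstruction is the indefinite remainder $(w\,\nabla\beta,\nabla w)_H$ coming from the spatial variation of the diffusion slope, which is not controlled by the mere $C^{0,1}$ regularity of $\alpha$. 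This is precisely where the structural gradient bound $|\nabla_x\alpha|\le p+L_0|\varphi|$ and the shift announced in the introduction enter: estimating this term by $\tfrac{\omega}{2}\|\nabla w\|_H^2+C\|w\|_H^2$ and shifting the operator by a multiple of the identity, $\mathcal{A}\mapsto\mathcal{A}+\kappa I$, restores strong monotonicity up to a zeroth-order term at the harmless cost of an exponential factor $e^{\kappa\tau}$. Making this estimate rigorous under the stated hypotheses is the technical heart of the argument.

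Finally, I would close the fixed-point argument. Subtracting the two frozen problems, testing the difference $w=\Phi(\phi_1)-\Phi(\phi_2)$ of the solutions against itself and using $w(0)=0$, the shifted strong-monotonicity estimate on the left and the global Lipschitz bounds on $g_0,\bm{g}_1$ together with Young's inequality on the right yield
\[
\tfrac{1}{2}\tfrac{d}{d\tau}\|w\|_H^2+\tfrac{C_1}{2}\|w\|_V^2\le C_2\|w\|_H^2+C_3\|\phi_1-\phi_2\|_H^2.
\]
Gr\"onwall's lemma and integration in $\tau$ give $\|\Phi(\phi_1)-\Phi(\phi_2)\|_{\mathcal{H}}^2\le C(T)\,T\,\|\phi_1-\phi_2\|_{\mathcal{H}}^2$, so $\Phi$ is a contraction on $\mathcal{H}$ for $T$ sufficiently small. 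Banach's fixed point theorem then yields a unique solution on a short time interval, and concatenating over consecutive subintervals (equivalently, running the estimate in an exponentially weighted norm) propagates existence and uniqueness to an arbitrary horizon $T>0$. The asserted regularity $\varphi\in\mathcal{V}$ (together with $\partial_\tau\varphi\in\mathcal{V}'$ and $\varphi\in C([0,T];H)$) is inherited from Theorem \ref{th:Showalter}.
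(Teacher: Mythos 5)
Your overall architecture---freeze the source, solve the frozen problem via Theorem~\ref{th:Showalter} with $p=2$, run a Banach fixed point on $\mathcal{H}$, and extend by continuation---coincides with the paper's. The genuine gap is precisely the step you yourself flag as ``the technical heart'': monotonicity of $\varphi\mapsto A\alpha(\cdot,\tau,\varphi)$, and the patch you propose cannot be carried out under the stated hypotheses. In your cross term $(w\,\nabla\beta,\nabla w)_H$, the factor $\nabla\beta=\int_0^1\bigl[\nabla_x\alpha^\prime_\varphi(\cdot,\tau,\varphi_2+sw)+\alpha^{\prime\prime}_{\varphi\varphi}(\cdot,\tau,\varphi_2+sw)\,(\nabla\varphi_2+s\nabla w)\bigr]ds$ involves the mixed derivative $\nabla_x\alpha^\prime_\varphi$ and the second derivative $\alpha^{\prime\prime}_{\varphi\varphi}$, whereas the hypotheses control only first derivatives of $\alpha$, namely $\omega\le\alpha^\prime_\varphi\le L$ and $|\nabla_x\alpha|\le p+L_0|\varphi|$. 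For $\alpha\in C^{0,1}(\mathcal{D})$---the case the theorem is designed for, since the value function (\ref{eq_alpha_general}) of a minimum over a nonconvex decision set is merely $C^{0,1}$, with $\alpha^\prime_\varphi$ piecewise constant in the examples of Section~4---the quantity $\nabla\beta$ is at best a measure, so $(w\,\nabla\beta,\nabla w)_H$ is not even well defined, let alone bounded by $\tfrac{\omega}{2}\Vert\nabla w\Vert_H^2+C\Vert w\Vert_H^2$. A G{\aa}rding-type shift $\mathcal{A}\mapsto\mathcal{A}+\kappa I$ can absorb a zeroth-order defect only \emph{after} such a bound is established; it cannot create one. (Incidentally, the ``shift'' announced in the paper's introduction is not a G{\aa}rding shift: it is the rewriting $-\Delta\alpha=A\alpha-\alpha$ with the term $+\alpha$ absorbed into $g_0$, cf.\ Section~3.) Note also that your coercivity and growth verifications are fine---those involve only $\alpha$ itself and $\nabla_x\alpha$; it is specifically the \emph{difference} of two arguments that drags in derivatives of $\alpha^\prime_\varphi$.

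The paper avoids the obstruction by never pairing $A\alpha(\cdot,\tau,\varphi)$ with test functions in the standard $(V,V^\prime)$ duality. It sets $\langle\mathcal{A}(\tau,\varphi),\psi\rangle=(A^{-1}A\alpha(\cdot,\tau,\varphi),\psi)_H=(\alpha(\cdot,\tau,\varphi),\psi)_H$, i.e.\ the duality is realized through $A^{-1}$---equivalently, one applies $A^{-1}$ to the equation and works with $V^\prime=H^{-1}$ as pivot space (the classical $H^{-1}$ approach to porous-medium-type equations, with the roles in Theorem~\ref{th:Showalter} shifted down one rung: $H=L^2$ plays the part of ``$V$'' and $V^\prime=H^{-1}$ that of the pivot). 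Monotonicity then becomes pointwise: $\langle\mathcal{A}(\tau,\varphi_1)-\mathcal{A}(\tau,\varphi_2),\varphi_1-\varphi_2\rangle=(\alpha(\cdot,\tau,\varphi_1)-\alpha(\cdot,\tau,\varphi_2),\varphi_1-\varphi_2)_H\ge\omega\Vert\varphi_1-\varphi_2\Vert_H^2$, using only $\alpha^\prime_\varphi\ge\omega$; no derivative of $\alpha^\prime_\varphi$ ever appears, and the bound $|\nabla_x\alpha|\le p+L_0|\varphi|$ is used only to check that $\varphi\mapsto\alpha(\cdot,\tau,\varphi)$ maps $V$ into $V$. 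Consistently, the contraction estimate is run with the evolution term in the $V^\prime$ norm: testing the difference of two frozen problems after applying $A^{-1}$ gives $\tfrac12\tfrac{d}{d\tau}\Vert A^{-1/2}(\varphi_1-\varphi_2)\Vert_H^2+\omega\Vert\varphi_1-\varphi_2\Vert_H^2\le\Vert\tilde f(\tilde\varphi_1)-\tilde f(\tilde\varphi_2)\Vert_H\,\Vert\varphi_1-\varphi_2\Vert_{V^\prime}$, where $\tilde f=A^{-1/2}\hat f$ and the divergence source is tamed by the Fourier-multiplier bound $|\xi_j|(1+|\xi|^2)^{-1/2}\le1$, so that $\Vert A^{-1/2}\partial_{x_j}g_{1j}\Vert_H\le\Vert g_{1j}\Vert_H$. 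This yields $\Vert F(\tilde\varphi_1)-F(\tilde\varphi_2)\Vert^2_{\mathcal{H}}\le\tfrac{\tilde\beta^2T}{2\omega}\Vert\tilde\varphi_1-\tilde\varphi_2\Vert^2_{\mathcal{H}}$, after which your small-$T$ contraction, continuation, and recovery of $\varphi\in\mathcal{V}$ from Theorem~\ref{th:Showalter} go through. In short: rerun your fixed-point scheme in the $A^{-1}$-twisted pairing (test with $A^{-1}w$, not $w$) and your outline closes; in the standard pairing it does not.
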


\noindent P r o o f: Recall that $ H = L^{2}(\mathbb{R}^d) $ and $ V = H^{1}(\mathbb{R}^d)$, its dual space being $ V^\prime = H^{-1}(\mathbb{R}^d)$.  Let the scalar products in $V$ and $ V^{'}$ be defined as follows:
\[
(f, g)_V =(A^{1/2}f, A^{1/2} g)_{H}=(Af, g)_{H}, \ 
(f, g)_{V^\prime} =(A^{-1/2}f, A^{-1/2} g)_{H}=(A^{-1}f, g)_{H},
\] respectively. Let us define the operator $\mathcal{A}(\tau, \cdot) : V\to V^\prime$ by 
\[
\langle\mathcal{A}(\tau, \varphi), \psi\rangle = ( A^{-1}A\alpha(\cdot,\tau, \varphi), \psi)_{H} =(\alpha(\cdot, \tau, \varphi), \psi)_{H}.
\]

Under the assumption made on the function $\alpha$ we can conclude that the mapping $\varphi\mapsto\alpha(\cdot,\tau,\varphi)$ maps $V$ into $V$. Indeed, if $\varphi\in V$ and $\eta=\alpha(\cdot,\tau,\varphi)$ then $\eta(x) = \alpha(x,\tau,\varphi(x)) - \alpha(x,\tau,0) + \alpha(x,\tau,0)$ and so
\[
|\eta(x)| \le (\max_\varphi \alpha^\prime_\varphi(x,\tau,\varphi)) |\varphi(x)|  + |h(x,\tau)| 
\le L |\varphi(x)| + |h(x,\tau)|. 
\]
Thus, $\int_{\mathbb{R}^d} |\eta(x)|^2 dx \le  2 \int_{\mathbb{R}^d} L^2 |\varphi(x)|^2  +  |h(x,\tau)|^2 dx \le 2L^2\Vert\varphi\Vert^2_H + 2 \Vert h(\cdot,\tau)\Vert^2_H$. Since $\nabla\eta(x) = \nabla_x\alpha(x,\tau, \varphi(x)) + \alpha^\prime_\varphi(x,\tau, \varphi(x))\nabla\varphi(x)$, we have
 \begin{equation*}
\begin{split}
 \Vert \eta\Vert_V^2=&\int_{\mathbb{R}^d} |\eta(x)|^2 + |\nabla\eta(x)|^2 dx 
\\ \leq & 2 \int_{\mathbb{R}^d} L^2 |\varphi(x)|^2  +  |h(x,\tau)|^2 dx + 2 \int_{\mathbb{R}^d} |p(x,\tau)|^2 + L_0^2|\varphi(x)|^2 dx  +  2\int_{\mathbb{R}^d} L^2|\nabla \varphi(x)|^2 dx \\ \leq & 2(L^2 \Vert \varphi \Vert_V^2 + \Vert h(\cdot, \tau)\Vert_H^2 + \Vert p(\cdot ,\tau)\Vert_H^2 + L_0^2 \Vert \varphi\Vert_H^2)<\infty,
\end{split}
\end{equation*}
because $p,h\in L^{\infty}((0,T);H)$. Consequently,  $\eta\in V$, as claimed.  

Next, we show that the operator $\mathcal{A}$ is monotone in the space $ V^\prime$. According to (\ref{lipschitz}) we have $(\alpha(x,\tau,\varphi_{1}) - \alpha(x,\tau,\varphi_{2}))(\varphi_{1} - \varphi_{2})\geq \omega (\varphi_{1} - \varphi_{2})^2$, for any $\varphi_1, \varphi_2\ge\varphi_{min}, x\in\mathbb{R}, \tau\in[0,T]$. 
 \begin{equation*}
\begin{split}
\langle\mathcal{A}(\tau,\varphi_{1}) -\mathcal{A}(\tau,\varphi_{2}), \varphi_{1} -\varphi_{2}\rangle & = (\alpha (\cdot,\tau,\varphi_{1}) - \alpha (\cdot,\tau, \varphi_{2}), \varphi_{1} - \varphi_{2}) 
\\ & = \int_{\mathbb{R}^d}(\alpha (x,\tau,\varphi_{1}(x)) - \alpha (x,\tau, \varphi_{2}(x)))( \varphi_{1}(x) - \varphi_{2}(x)) dx 
\\ & \geq \int_{\mathbb{R}^d}\omega|\varphi_{1}(x) - \varphi_{2}(x)|^{2} dx  = \omega \Vert \varphi_{1} - \varphi_{2}\Vert^{2}_{H}.
\end{split}
\end{equation*}
This implies that the operator $ \mathcal{A}(\tau,\cdot)$ is strongly monotone. 

For a given $\tilde\varphi\in {\mathcal H}$, we have $\hat{f}\in\mathcal{V}^\prime$, where $\hat{f}(\tau) =  g_0(\tau,\tilde \varphi(\cdot, \tau)) + \nabla\cdot \bm{g}_1(\tau, \tilde\varphi(\cdot, \tau))$, because $g_0, g_{1j}: [0,T]\times H\to H $ are globally Lipschitz continuous, $H\hookrightarrow V^\prime$, and the operator $\nabla$ maps $H$ into $V^\prime$. The hemicontinuity, boundedness, and coercivity of the operator $\mathcal{A}$ follows from the assumption that $\alpha $ is globally Lipschitz continuous and strictly increasing.

Applying Theorem~\ref{th:Showalter} we deduce the existence of  a unique solution $ \varphi \in {\mathcal V} $ such that 
\begin{equation}
	\partial_{\tau}\varphi + \mathcal{A}(\tau,\varphi) = \hat{f}(\tau), \qquad \varphi_{0}\in H, 
	\label{eq:hatf}
\end{equation}
where $ \mathcal{A}(\tau,\varphi) = A\alpha(\cdot,\tau,\varphi)$. Next, we multiply (\ref{eq:hatf}) by $A^{-1}$ to obtain
\begin{equation}
\partial_{\tau}A^{-1}\varphi + \alpha(\cdot,\tau, \varphi) = f,
\label{eq:f}
\end{equation}
where $f=f(\tau,\tilde\varphi) = A^{-1}\hat{f}(\tau).$  For $\tau\in[0,T]$, we denote $\tilde{f}(\tilde\varphi) = A^{-1/2} \hat{f}(\tau) =  A^{-1/2} g_0(\tau,\tilde\varphi) + A^{-1/2}\sum_{j=1}^d \partial_{x_j}g_{1j}(\tau,\tilde\varphi)$. For the Fourier transform of $\tilde f$, we have
\[
\widehat{\tilde{f}(\tilde\varphi)}(\xi) = \frac{1}{(1+ |\xi|^2)^{1/2}}\widehat{g_0(\tau, \tilde\varphi})(\xi) + \sum_{j=1}^d \frac{(-i\xi_j)}{(1+ |\xi|^2)^{1/2}}\widehat{g_{1j}(\tau, \tilde\varphi})(\xi).
\]
Let $\beta>0$ be the Lipschitz constant of the mappings $g_0, g_{1j}, j=1, \cdots, d$. Using Parseval's identity and Lipschitz continuity of $g_0, g_{1j}$ in $H$, we obtain, for $\tilde\varphi_1, \tilde\varphi_2\in\mathcal{H}$, 
\begin{equation*}
\begin{split}
\Vert\tilde{f}(\tilde\varphi_{1}) - \tilde{f}(\tilde\varphi_{2})\Vert^{2}_{H} & = \Vert\widehat{\tilde{f}(\tilde\varphi_{1})} - \widehat{\tilde{f}(\tilde\varphi_{2})}\Vert^{2}_{H} = \int_{\mathbb{R}^d} |\widehat{\tilde{f}(\tilde\varphi_{1})}(\xi)- \widehat{\tilde{f}(\tilde\varphi_{2})}(\xi)|^{2}d\xi \\& 
\le
2 \int_{\mathbb{R}^d} \frac{1}{1+|\xi|^{2}}|\widehat{g_0(\tau,\tilde\varphi_{1})}(\xi) - \widehat{g_0(\tau,\tilde\varphi_{2})}(\xi)|^{2} 
\\ &
\quad + \sum_{j=1}^d \frac{|\xi|^2}{1+|\xi|^{2}}|\widehat{g_{1j}(\tau,\tilde\varphi_{1})}(\xi) - \widehat{g_{1j}(\tau,\tilde\varphi_{2})}(\xi)|^{2}
d\xi \\ & 
\leq 2\Vert \widehat{g_0(\tau,\tilde\varphi_{1})} - \widehat{g_0(\tau,\varphi_{2})}\Vert^{2}_{H}  
+ 2\sum_{j=1}^d \Vert \widehat{g_{1j}(\tau,\tilde\varphi_{1})} - \widehat{g_{1j}(\tau,\varphi_{2})}\Vert^{2}_{H}
\\ &
= 2 \Vert g_0(\tau,\tilde\varphi_{1}) - g_0(\tau,\tilde\varphi_{2})\Vert^{2}_{H}
+ 2 \sum_{j=1}^d \Vert g_{1j}(\tau,\tilde\varphi_{1}) - g_{1j}(\tau,\tilde\varphi_{2})\Vert^{2}_{H}  
\\ &
\leq \tilde \beta^{2} \Vert \tilde\varphi_{1} - \tilde\varphi_{2}\Vert^{2}_{H},
\end{split}
\end{equation*} where $\tilde \beta^2= 2(1+d)\beta^2$. Hence, we obtain 
\begin{equation}
\Vert \tilde{f}(\tilde\varphi_{1}) - \tilde{f}(\tilde\varphi_{2})\Vert_{H} \leq \tilde\beta \Vert\tilde\varphi_{1} - \tilde\varphi_{2}\Vert_{H}.
	\label{eq:liphatf}
\end{equation}
\noindent 
Suppose $\varphi_{1}, \varphi_{2} \in \mathcal{H} $ are such that $\varphi_{1}=F(\tilde{\varphi_{1}})$ and $\varphi_{2} = F(\tilde{\varphi_{2}}).$ Here, the map  
$ F : \mathcal{H} \to \mathcal{H} $ is defined by $ \varphi = F(\tilde{\varphi}), $ where $ \varphi$ is a solution to the Cauchy problem
\begin{equation*}
\partial_{\tau}A^{-1}\varphi + \alpha(\cdot,\tau,\varphi) = f(\tau,\tilde{\varphi}), \qquad \varphi(0) =\varphi_{0}. 
\end{equation*}
Letting $ \varphi = \varphi_{1} -\varphi_{2} = F(\tilde{\varphi_{1}}) - F(\tilde{\varphi_{2}}),$ we obtain
\begin{equation}
\partial_{\tau}A^{-1}(\varphi_{1} - \varphi_{2}) + \alpha(\cdot,\tau,\varphi_{1}) -\alpha(\cdot,\tau,\varphi_{2}) = f(\tilde{\varphi}_{1}) -f(\tilde{\varphi}_{2}).
	\label{eq:dsol}
\end{equation}
Next, we multiply (\ref{eq:dsol})  by $ \varphi_{1} -\varphi_{2}$ and take the scalar product in the space $ H$ to obtain
\begin{eqnarray}
 (\partial_{\tau}A^{-1}(\varphi_{1} - \varphi_{2}), \varphi_{1} - \varphi_{2}) &+& (\alpha(\cdot,\tau,\varphi_{1}) -\alpha(\cdot,\tau,\varphi_{2}), \varphi_{1} - \varphi_{2}) \nonumber 
\\
&=& ( f(\tau, \tilde\varphi_{1}) -f(\tau, \tilde{\varphi}_{2}), \varphi_{1} - \varphi_{2}).
\label{eq:scph}
\end{eqnarray}
Using (\ref{eq:liphatf}) and the fact that $ A^{-1/2}$ is self-adjoint in $H$, then (\ref{eq:scph}) gives

\begin{equation*}
\begin{split}
\frac{1}{2}\frac{d}{d\tau} & \Vert A^{-1/2}(\varphi_{1}-\varphi_{2})\Vert^{2}_{H} + \omega\Vert\varphi_{1}- \varphi_{2}\Vert^{2}_{H}
\\
& \leq \langle f(\tau, \tilde{\varphi}_{1})-f(\tau, \tilde{\varphi}_{2}), \varphi_{1}-\varphi_{2}\rangle  = \langle A^{1/2}(f(\tau, \tilde\varphi_{1}) - f(\tau, \tilde\varphi_{2})), A^{-1/2}(\varphi_{1} -\varphi_{2})\rangle \\& \leq \Vert A^{1/2}(f(\tau, \tilde\varphi_{1}) - f(\tau, \tilde\varphi_{2}))\Vert_{H} \Vert \varphi_{1} -\varphi_{2}\Vert_{V^\prime}  = \Vert \tilde{f}(\tilde\varphi_{1}) - \tilde{f}(\tilde\varphi_{2})\Vert_{H} \Vert \varphi_{1} -\varphi_{2}\Vert_{V^\prime} \\& \leq \tilde \beta \Vert\tilde{\varphi_{1}} - \tilde{\varphi_{2}}\Vert_{H} \Vert \varphi_{1} -\varphi_{2}\Vert_{V^\prime}.  
\end{split}
\end{equation*}

\noindent This implies 
\[
\frac{1}{2}\frac{d}{d\tau}\Vert \varphi_{1}-\varphi_{2}\Vert^{2}_{V^\prime} + \omega\Vert\varphi_{1}- \varphi_{2}\Vert^{2}_{H} \leq \tilde\beta \Vert\tilde{\varphi_{1}} - \tilde{\varphi_{2}}\Vert_{H} \Vert \varphi_{1} -\varphi_{2}\Vert_{V^\prime}.
\]
Then, integrating on a small time interval $ [0,T]$ from $ 0$ to $t$ and noting that $ \varphi_{1}(0) = \varphi_{2}(0) =\varphi_{0}$, we obtain

\begin{equation*}
\begin{split}
\frac{1}{2}\Vert \varphi_{1}(\tau)-\varphi_{2}(\tau)\Vert^{2}_{V^\prime} & + \omega \int_{0}^{\tau} \Vert\varphi_{1}(s)- \varphi_{2}(s)\Vert^{2}_{H}ds 
\\ & 
\leq \tilde\beta \int_{0}^{\tau}\Vert\tilde{\varphi_{1}}(s) - \tilde{\varphi_{2}}(s)\Vert_{H} \Vert \varphi_{1}(s) -\varphi_{2}(s)\Vert_{V^\prime}ds 
\\ & 
\leq \tilde\beta \max\limits_{\tau\in[0,T]} ~ \Vert\varphi_{1}(\tau) - \varphi_{2}(\tau)\Vert_{V^\prime}\int_{0}^{T}\Vert \tilde{\varphi_{1}}(\tau) - \tilde{\varphi_{2}}(\tau)\Vert_{H}d\tau.
\end{split}
\end{equation*}
Taking the maximum over $ \tau \in [0,T]$ and using the fact that for any $ a,b\in\mathbb{R}, ~ ab \leq \frac{1}{2} a^{2} + \frac{1}{2}b^{2}$, we obtain 
\begin{equation*}
\begin{split}
\frac{1}{2}(\max\limits_{\tau\in[0,T]} ~ &\Vert\varphi_{1}(\tau) - \varphi_{2}(\tau)\Vert_{V^\prime})^{2}  + \omega \int_{0}^{T} \Vert\varphi_{1}(\tau)- \varphi_{2}(\tau)\Vert^{2}_{H}d\tau \\& \leq \tilde\beta \max\limits_{\tau\in[0,T]} ~ \Vert\varphi_{1}(\tau) - \varphi_{2}(\tau)\Vert_{V^\prime}  \int_{0}^{T}\Vert \tilde{\varphi_{1}}(\tau) - \tilde{\varphi_{2}}(\tau)\Vert_{H}d\tau
\\& 
\leq \frac{1}{2}(\max\limits_{\tau\in[0,T]} ~ \Vert\varphi_{1}(\tau) - \varphi_{2}(\tau)\Vert_{V^\prime})^{2}  + \frac{\tilde\beta^{2}}{2} (\int_{0}^{T}\Vert \tilde{\varphi_{1}}(\tau) - \tilde{\varphi_{2}}(\tau)\Vert_{H}d\tau)^{2}.
\end{split}
\end{equation*}
Using the Cauchy-Schwartz inequality, we obtain 
$\omega \int_{0}^{T} \Vert\varphi_{1}(\tau)- \varphi_{2}(\tau)\Vert^{2}_{H}d\tau \leq \frac{\tilde\beta^{2}}{2} \int_{0}^{T}d\tau \int_{0}^{T}\Vert \tilde{\varphi_{1}}(\tau) - \tilde{\varphi_{2}}(\tau)\Vert^{2}_{H}d\tau = \frac{\tilde\beta^{2} T}{2}\int_{0}^{T}\Vert \tilde{\varphi_{1}}(\tau) - \tilde{\varphi_{2}}(\tau)\Vert^{2}_{H}d\tau.
$
This implies that 
\[
\Vert F(\tilde{\varphi_{1}}) - F(\tilde{\varphi_{2}})\Vert^{2}_{\mathcal{H}} \leq \frac{\tilde\beta^{2} T}{2\omega}\Vert \tilde{\varphi_{1}} -\tilde{\varphi_{2}}\Vert^{2}_{\mathcal{H}}. 
\]
Thus, for $T$ sufficiently small such that $\frac{\tilde\beta^{2} T}{2\omega} <1$, the operator $F$ is a contraction on the space $ \mathcal{H}$; therefore by the  Banach fixed point theorem, $F$ has a unique fixed point in $ \mathcal{H}$. It is worth noting that $\tilde\beta$ and $ \omega$ are given such that they are independent of $T$. If $T>0$ is arbitrary, then we can apply a simple continuation argument. Indeed, if the solution exists in $(0,T_0)$ interval with $\frac{\tilde \beta^{2} T_0}{2 \omega} <1$, then starting from the initial condition $\varphi_0=\varphi(T_0/2)$ we can continue the solution $\varphi$ from the interval $(0,T_0)$ over the interval $(0, T_0)\cup (T_0/2, T_0/2 +T_0) \equiv (0, 3T_0/2)$. Continuing in this manner, we obtain the existence and uniqueness of a solution $\varphi\in{\mathcal H}$ defined on the time interval $(0,T)$. 

Finally, the solution belongs to the space $\mathcal V$ because the right-hand side, i.e., the function  $\hat f(\tau)=g_0(\tau, \varphi(\cdot, \tau)) + \nabla\cdot\bm{g}_1(\tau, \varphi(\cdot, \tau))$ belongs to ${\mathcal V}'$. Applying Theorem~\ref{th:Showalter} we conclude  $\varphi\in{\mathcal V}$, as claimed.
\hfill
$\diamondsuit$

\medskip

The following result shows that the unique solution is absolutely continuous and satisfies the a-priori energy estimates. Under assumption of the previous theorem we have $\alpha(\cdot, 0), g_0(\cdot, 0), g_{1j}(\cdot, 0) \in \mathcal{H}$. Here, the space $\mathcal{X} = L^{\infty} ((0,T);V^\prime)$ is endowed with the norm 
\[
\Vert \varphi\Vert^{2}_{\mathcal{X}} = \sup_{\tau\in[0,T]}\Vert \varphi(\tau)\Vert^{2}_{V^\prime}, \; \forall\varphi\in\mathcal{X}.
\] 

\begin{theorem}
\label{cor:abs-continuous}
Suppose that the functions $\alpha, g_0, g_{1j}$ fulfills the assumptions of Theorem~\ref{th:alpha-existence}. Then the unique solution $ \varphi \in\mathcal{V}$ to the Cauchy problem $ (\ref{eq:hatf})$ is absolutely continuous, i.e., $ \varphi\in C([0,T];H)$. Moreover, there exist a constant $\tilde{C}>0$, such that the unique solution satisfies the following inequality:
\begin{equation}
\Vert \varphi \Vert^2_{\mathcal{X}} 
+ \Vert \varphi\Vert^2_{\mathcal{H}} 
\leq \tilde{C} \bigl(
\Vert \varphi_0\Vert^2_{V^\prime}
+ \Vert \alpha(\cdot, 0)\Vert^2_{\mathcal{H}}
+ \Vert g_0(\cdot, 0)\Vert^2_{\mathcal{H}}
+ \sum_{j=1}^d \Vert g_{1j}(\cdot, 0)\Vert^2_{\mathcal{H}}
\bigr).
\label{prior_est}
\end{equation}
\end{theorem}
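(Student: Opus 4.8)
The plan is to derive the assertion in two stages: the regularity statement $\varphi\in C([0,T];H)$ follows from a standard Gelfand-triple embedding, while the estimate (\ref{prior_est}) comes from testing the equation against the solution in the $V'$-inner product and closing the argument with Gronwall's inequality.

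First I would establish the absolute continuity. From equation (\ref{eq:hatf}) we read off $\partial_\tau\varphi = \hat f(\tau) - A\alpha(\cdot,\tau,\varphi)$. Since $\varphi\in\mathcal{V}$, the computation in the proof of Theorem~\ref{th:alpha-existence} shows $\alpha(\cdot,\cdot,\varphi)\in\mathcal{V}$, hence $A\alpha(\cdot,\cdot,\varphi)\in\mathcal{V}'$; together with $\hat f\in\mathcal{V}'$ this gives $\partial_\tau\varphi\in\mathcal{V}'$. The classical embedding for the Gelfand triple $V\hookrightarrow H\hookrightarrow V'$ (Lions--Magenes lemma, cf.\ \cite{Showalter}) then yields $\varphi\in C([0,T];H)$, which is the asserted absolute continuity, and in particular $\tau\mapsto\Vert\varphi(\tau)\Vert_H^2$ is absolutely continuous.

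For (\ref{prior_est}) I would multiply (\ref{eq:hatf}) by $A^{-1}$ to get $\partial_\tau A^{-1}\varphi + \alpha(\cdot,\tau,\varphi)=f$ with $f=A^{-1}\hat f$, and take the $H$-scalar product with $\varphi$. To justify that $\tau\mapsto\Vert\varphi\Vert_{V'}^2=\Vert A^{-1/2}\varphi\Vert_H^2$ is absolutely continuous with $\tfrac12\tfrac{d}{d\tau}\Vert\varphi\Vert_{V'}^2=(\partial_\tau A^{-1}\varphi,\varphi)_H$, I would apply the same Lions--Magenes lemma to $w=A^{-1/2}\varphi$, observing that $w\in L^2(0,T;V)$ (equivalent to $\varphi\in\mathcal{H}$, which holds since $\mathcal{V}\hookrightarrow\mathcal{H}$) and $\partial_\tau w=A^{-1/2}\partial_\tau\varphi\in L^2(0,T;V')$. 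Writing $\alpha(\cdot,\tau,\varphi)=[\alpha(\cdot,\tau,\varphi)-\alpha(\cdot,\tau,0)]+h$ with $h=\alpha(\cdot,\tau,0)$ and using the pointwise inequality $(\alpha(x,\tau,\varphi)-\alpha(x,\tau,0))\varphi\ge\omega\varphi^2$ coming from $\alpha'_\varphi\ge\omega$, I would bound $(\alpha(\cdot,\tau,\varphi),\varphi)_H\ge\omega\Vert\varphi\Vert_H^2+(h,\varphi)_H$, so that
\begin{equation*}
\tfrac12\tfrac{d}{d\tau}\Vert\varphi\Vert_{V'}^2 + \omega\Vert\varphi\Vert_H^2 \le (f,\varphi)_H - (h,\varphi)_H .
\end{equation*}

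For the right-hand side I would write $(f,\varphi)_H=(\tilde f(\varphi),A^{-1/2}\varphi)_H\le\Vert\tilde f(\varphi)\Vert_H\Vert\varphi\Vert_{V'}$ using self-adjointness of $A^{-1/2}$, and then estimate $\Vert\tilde f(\varphi)\Vert_H\le\tilde\beta\Vert\varphi\Vert_H+\Vert\tilde f(0)\Vert_H$ by the Lipschitz/Parseval computation already carried out for (\ref{eq:liphatf}), with $\Vert\tilde f(0)\Vert_H^2$ controlled by $\Vert g_0(\cdot,0)\Vert_H^2+\sum_{j=1}^d\Vert g_{1j}(\cdot,0)\Vert_H^2$. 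The \emph{main obstacle} is the absorption step: the product $\tilde\beta\Vert\varphi\Vert_H\Vert\varphi\Vert_{V'}$ carries the constant $\tilde\beta$, which may exceed the coercivity constant $\omega$, so $\Vert\varphi\Vert_H^2$ cannot be absorbed into $\omega\Vert\varphi\Vert_H^2$ directly. The remedy is to apply Young's inequality $ab\le\tfrac{\delta}{2}a^2+\tfrac{1}{2\delta}b^2$ to the pairing $\Vert\varphi\Vert_{V'}\cdot\Vert\tilde f(\varphi)\Vert_H$ with $\delta$ large enough that $\tilde\beta^2/\delta\le\omega/2$; this leaves only a small, absorbable multiple of $\Vert\varphi\Vert_H^2$ at the price of a large multiple $\tfrac{\delta}{2}\Vert\varphi\Vert_{V'}^2$, which is harmless for the Gronwall step. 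The term $(h,\varphi)_H$ is treated by Young's inequality in $H$, absorbing a further $\tfrac{\omega}{4}\Vert\varphi\Vert_H^2$ and leaving $\tfrac{1}{\omega}\Vert h\Vert_H^2$.

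Collecting the terms, with $y(\tau)=\Vert\varphi(\tau)\Vert_{V'}^2$ I arrive at a differential inequality of the form
\begin{equation*}
y' + c_1\Vert\varphi\Vert_H^2 \le \delta\, y + G(\tau), \qquad G = c_2\Bigl(\Vert h\Vert_H^2 + \Vert g_0(\cdot,0)\Vert_H^2 + \textstyle\sum_{j=1}^d\Vert g_{1j}(\cdot,0)\Vert_H^2\Bigr).
\end{equation*}
Dropping the nonnegative term $c_1\Vert\varphi\Vert_H^2$ and applying Gronwall's inequality gives $\sup_{\tau\in[0,T]}y(\tau)\le e^{\delta T}\bigl(\Vert\varphi_0\Vert_{V'}^2+\int_0^T G\,d\tau\bigr)$, which is the bound on $\Vert\varphi\Vert_{\mathcal X}^2$. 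Integrating the full inequality over $(0,T)$ and inserting this supremum bound controls $c_1\int_0^T\Vert\varphi\Vert_H^2\,d\tau=c_1\Vert\varphi\Vert_{\mathcal H}^2$. Adding the two contributions and recalling $\int_0^T G\,d\tau\le C\bigl(\Vert\alpha(\cdot,0)\Vert_{\mathcal H}^2+\Vert g_0(\cdot,0)\Vert_{\mathcal H}^2+\sum_{j=1}^d\Vert g_{1j}(\cdot,0)\Vert_{\mathcal H}^2\bigr)$, since $h=\alpha(\cdot,\tau,0)$, yields (\ref{prior_est}) with a constant $\tilde C$ depending only on $\omega$, $\tilde\beta$ and $T$.
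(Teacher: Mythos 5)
Your proposal is correct and follows essentially the same route as the paper's proof: the embedding $W=\{\varphi\in\mathcal{V},\ \partial_\tau\varphi\in\mathcal{V}'\}\hookrightarrow C([0,T];H)$ for the continuity claim, then testing the $A^{-1}$-multiplied equation with $\varphi$, using strong monotonicity of $\alpha$, the Lipschitz bound (\ref{eq:liphatf}), weighted Young inequalities, and Gronwall. Your version is in fact slightly more careful than the paper's in two spots — you justify the identity $\tfrac12\tfrac{d}{d\tau}\Vert\varphi\Vert_{V'}^2=(\partial_\tau A^{-1}\varphi,\varphi)_H$ via the Lions--Magenes lemma applied to $A^{-1/2}\varphi$, and you spell out the final step (Gronwall for the $\mathcal{X}$-bound, then integration for the $\mathcal{H}$-bound) that the paper compresses into ``the proof follows.''
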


\begin{proof}
Since $ \hat{f}\in\mathcal{V}^\prime$, where $\hat{f} = g_0 + \nabla \cdot \bm{g}_{1}$ and $\mathcal{A}(\tau,\varphi)\in\mathcal{V}^\prime$, then $\partial_{\tau}\varphi\in\mathcal{V}^\prime $. Therefore, for each $\varphi_0 \in H$, we have $\varphi\in W$ where $W$ is the Banach space  $W=\{\varphi,  \varphi\in\mathcal{V}, \partial_\tau\varphi \in \mathcal{V}'\}$. According to \cite[Proposition \;1.2]{Showalter}, we have $W \hookrightarrow C([0,T];H)$.
Hence, the unique solution $ \varphi$ to the Cauchy problem belongs to the space $ C([0,T];H)$, as claimed.

\medskip

Next, we show that the unique solution satisfies a-priori energy estimate (\ref{prior_est}). Let $\varphi$ be a unique solution to the Cauchy problem (\ref{equ:g_0,g_1}). Multiply (\ref{eq:f}) by $\varphi$ and take the scalar product in $H$ to obtain
\begin{equation}
(\partial_{\tau}A^{-1}\varphi, \varphi)_{H} + (\alpha(\cdot,\tau,\varphi), \varphi)_{H} =(A^{-1}g_0(\tau, \varphi) + A^{-1} \nabla\cdot \bm{g}_1(\tau, \varphi), \varphi).
\label{equ:scalarH}
\end{equation}
Using the Lipschitz continuity of $g_0, \bm{g}_1$ and strong monotonicity of $\alpha$,  we obtain
\begin{eqnarray*}
\frac{1}{2}\frac{d}{d\tau} \Vert \varphi\Vert^2_{V'} + \omega \Vert \varphi\Vert^2_{H} 
&=& 
(\partial_{\tau}A^{-1}\varphi, \varphi) +  \omega \Vert \varphi\Vert^2_{H} 
\\
&\le&
(\partial_{\tau}A^{-1}\varphi, \varphi) + (\alpha(\cdot,\varphi) - \alpha(\cdot, 0), \varphi)  
\\
&=& 
(A^{-1} ( g_0(\cdot, \varphi) + \nabla\cdot \bm{g}_1(\tau, \varphi) ) - \alpha(\cdot, 0), \varphi)
\\
&=& 
(A^{-1} ( g_0(\cdot, \varphi) - g_0(\cdot, 0)  + \nabla\cdot \bm{g}_1(\cdot, \varphi) - \nabla\cdot \bm{g}_1(\cdot, 0) ), \varphi)
\\
&& + (A^{-1} ( g_0(\cdot, 0)  + \nabla\cdot \bm{g}_1(\cdot, 0) ), \varphi) - (\alpha(\cdot, 0), \varphi)
\\
&=& (A^{-1/2} ( g_0(\cdot, \varphi) - g_0(\cdot, 0)  + \nabla\cdot \bm{g}_1(\cdot, \varphi) - \nabla\cdot \bm{g}_1(\cdot, 0) ), A^{-1/2} \varphi)
\\
&& + (A^{-1/2} ( g_0(\cdot, 0)  + \nabla\cdot \bm{g}_1(\cdot, 0) ), A^{-1/2} \varphi) - (\alpha(\cdot, 0), \varphi)
\\
&\le& 
\beta (1+d) \Vert \varphi\Vert_H \Vert\varphi\Vert_{V'}
+ \Vert A^{-1/2} ( g_0(\cdot, 0)  + \nabla\cdot \bm{g}_1(\cdot, 0) )\Vert_H \Vert\varphi\Vert_{V'} 
\\
&& + \Vert\alpha(\cdot,0)\Vert_{H}\Vert\varphi\Vert_{H}
\\
&\le& 
\frac{\omega}{4} \Vert \varphi\Vert^2 _H + \frac{\beta^2 (1+d)^2}{\omega} \Vert\varphi\Vert^2_{V'}
+ \frac{1}{2}\Vert A^{-1/2} ( g_0(\cdot, 0)  + \nabla\cdot \bm{g}_1(\cdot, 0) )\Vert^2_H  
\\&& 
+ \frac{1}{2}\Vert\varphi\Vert^2_{V'}  
+ \frac{1}{\omega}\Vert\alpha(\cdot,0)\Vert^2_{H} + \frac{\omega}{4}\Vert\varphi\Vert^2_{H}.
\end{eqnarray*}
Hence, there exist constants $C_0, C_1>0$ such that 
\begin{eqnarray*}
\frac{d}{d\tau} \Vert \varphi\Vert^2_{V'} + \omega \Vert \varphi\Vert^2_{H} 
&\le& 
 C_1 \Vert\varphi\Vert^2_{V'} + C_0 \bigl( \Vert  g_0(\cdot, 0)\Vert^2_H + \sum_{j=1}^d  \Vert  g_{1j}(\cdot, 0)\Vert^2_H  +  \Vert \alpha(\cdot, 0)\Vert^2_H \bigr).
\end{eqnarray*}
Solving the differential inequality $y'(\tau) \le C_1 y(\tau) + r(\tau) $, where $y(\tau) = \Vert\varphi(\cdot,\tau)\Vert^2_{V'}$ and $r(\tau) = C_0\bigl( \Vert  g_0(\cdot, \tau, 0)\Vert^2_H + \sum_{j=1}^d  \Vert  g_{1j}(\cdot, \tau, 0)\Vert^2_H +  \Vert \alpha(\cdot,\tau, 0)\Vert^2_H \bigr)$, yields 
\[
y(\tau) \le e^{C_1 T} \bigl(y(0) + \int_0^T r(s)ds \bigr),
\]
and the proof of the Theorem follows. 
\end{proof}

\section{The Riccati transformation of the HJB equation and application to optimal portfolio selection problem}
\label{sec:HJB}

\subsection{The Riccati transformation}

In this section, we present how the HJB equation (\ref{eq_HJB}) can be transformed into a quasilinear PDE, which is equivalent to the Cauchy problem for the nonlinear parabolic equation (\ref{generalPDE}).

Following the methodology introduced by Abe and Ishimura \cite{AI}, Ishimura and \v{S}ev\v{c}ovi\v{c} \cite{IshSev}, \v{S}ev\v{c}ovi\v{c} and Macov\'a \cite{MS}, and Kilianov\'a and \v{S}ev\v{c}ovi\v{c} \cite{KilianovaSevcovicANZIAM}, the Riccati transformation $\varphi$ of the value function $V$ can be introduced as follows:

\begin{equation}
\varphi(x,\tau) = - \frac{\partial_x^2 V(x,t)}{\partial_x V(x,t)}, \quad\hbox{where}\ \ \tau=T-t.
\label{eq_varphi}
\end{equation}

Suppose for a moment that the value function $V(x,t)$ is increasing in the $x$-variable. This is a natural assumption in the case when the terminal utility function $u(x)$ is increasing itself. Then the HJB equation (\ref{eq_HJB}) can be rewritten as follows:
\begin{equation}
\partial_t V - \alpha(\cdot,\varphi) \partial_x V = 0, \qquad V(\cdot ,T)=u(\cdot),
\label{eq_HJBtransf}
\end{equation}
where $\alpha(x,\tau,\varphi)$ is the value function of the following parametric optimization problem:
\begin{equation}
\alpha(x,\tau,\varphi) = \min_{ {\bm{\theta}} \in \triangle} 
\left(
-\mu(x,t,{\bm{\theta}}) +  \frac{\varphi}{2}\sigma(x,t,{\bm{\theta}})^2\right), \quad \tau=T-t\,.
\label{eq_alpha_def}
\end{equation}

\begin{remark}
The optimization problem (\ref{eq_alpha_def}) is related to the classical  Markowitz model on optimal portfolio selection problem formulated as maximization of the mean return $\mu({\bm{\theta}})\equiv {\bm{\mu}}^T {\bm{\theta}} $ under the volatility constraint $\frac12\sigma({\bm{\theta}})^2\equiv\frac12 {\bm{\theta}}^T{\bm{\Sigma}} {\bm{\theta}} \le  \frac12\sigma^2_0$, i.e.:
\begin{eqnarray*}
&&\max_{{\bm{\theta}}\in\triangle} {\bm{\mu}}^T {\bm{\theta}},
\quad s.t.\ \ \frac12 {\bm{\theta}}^T{\bm{\Sigma}} {\bm{\theta}} \le  \frac12\sigma^2_0,
\end{eqnarray*}
where the decision set is the simplex
$\triangle= \{{\bm{\theta}} \in \mathbb{R}^n\  |\  {\bm{\theta}} \ge \mathbf{0}, \mathbf{1}^{T} {\bm{\theta}} = 1\}$. Indeed, the Lagrange multiplier for the volatility constraint can be identified as the parameter $\varphi$ entering the parametric optimization problem  (\ref{eq_alpha_def}).
\end{remark}

In what follows, we shall denote by $\partial_x\alpha$ the total differential of the function $\alpha(x,\tau,\varphi)$ where $\varphi=\varphi(x,\tau)$, that is
\[
\partial_x\alpha (x,\tau,\varphi) = \alpha^\prime_x(x,\tau,\varphi) + \alpha^\prime_\varphi(x,\tau,\varphi)\, \partial_x \varphi,
\]
where $\alpha^\prime_x$ and $\alpha^\prime_\varphi$ are partial derivatives of $\alpha$ with respect to variables $x$ and $\varphi$, respectively.

The relationship between the transformed function $\varphi$ and the value function $V$ is given by the result due to Kilianov\'a and \v{S}ev\v{c}ovi\v{c} \cite{KilianovaSevcovicKybernetika}. With regard to \cite[Theorem 4.2]{KilianovaSevcovicKybernetika}, an increasing value function $V(x,t)$ in the $x$-variable is a solution to the Hamilton-Jacobi-Bellman equation (\ref{eq_HJB}) if and only if the transformed function $\varphi(x,\tau) = -\partial_x^2 V(x,t)/\partial_x V(x,t),\,\ t=T-\tau$,  is a solution to the Cauchy problem for quasi-linear parabolic PDE:
\begin{eqnarray}
&&\partial_\tau \varphi - \partial^2_x \alpha(\cdot,\varphi) = - \partial_x \left( \alpha(\cdot,\varphi)\varphi\right),
\label{eq_PDEphi-cons}
\\
&&\varphi(x,0) = \varphi_0(x) \equiv -u''(x)/u'(x),\quad (x,\tau)\in\mathbb{R}\times(0,T). 
\label{init_PDEphi-cons}
\end{eqnarray}

It is worth noting that the Cauchy problem for the quasi-linear parabolic PDE (\ref{eq_PDEphi-cons}) is equivalent to the nonlinear parabolic equation (\ref{generalPDE}) in one-dimensional space. This is obtainable after some shift/perturbation in the main operator of the transformed equation (\ref{eq_PDEphi-cons}).

\subsection{Properties of the value function as a diffusion function}

This section investigates qualitative properties of the value function and sufficient conditions imposed on the decision set $\triangle$ and functions $\mu$ and $\sigma$ that guarantee higher smoothness of the value function $\alpha$. Let us denote by $C^{k,1}(\mathcal{D})$ the space consisting of all $k$-differentiable functions defined on the domain $\mathcal{D}\subset \mathbb{R}^{d+2}$, whose $k$-th derivative is globally Lipschitz continuous. The next  proposition shows (under certain assumptions) that the value function $\alpha$ belongs to $C^{0,1}(\mathcal{D})$, where $\mathcal{D}=\mathbb{R}^d\times (0,T)\times (\varphi_{min},\infty)$.

\begin{proposition}\label{smootheness0}
Let $\triangle\subset\mathbb{R}^n$ be a given compact decision set. Assume that the functions $\mu(x,t,{\bm{\theta}})$ and $\sigma(x,t,{\bm{\theta}})^2$ are globally Lipschitz continuous in $x\in\mathbb{R}^d, t\in[0,T]$ and ${\bm{\theta}}\in\triangle$ variables, and there exist positive constants $\omega, L>0$ such that $\omega\le \frac12\sigma(x,t,{\bm{\theta}})^2\le L$ for any $x\in\mathbb{R}^d, t\in[0,T]$, and ${\bm{\theta}}\in\triangle$. 

Then  $\alpha\in C^{0,1}(\mathcal{D})$. Moreover, the function $\alpha$ is strictly increasing, and
\begin{equation}
0<\omega\le \frac{\alpha(x,\tau, \varphi_2)-\alpha(x,\tau, \varphi_1)}{\varphi_2-\varphi_1} \le L, 
\quad \text{for any}\ (x,\tau,\varphi_i)\in\mathcal{D}, 
\label{lipschitz}
\end{equation}
i.e., $\omega\le\alpha^\prime_\varphi(x,\tau,\varphi)\le L$, and
\begin{equation}
|\nabla_x\alpha(x,\tau,\varphi)| \le  p(x,\tau) + L_{0} |\varphi|, 
\label{lipschitz-x}
\end{equation}
for a.e. $(x,\tau,\varphi)\in\mathcal{D}$, where $p(x,\tau) := \max_{ {\bm{\theta}} \in \triangle} |\nabla_x\mu(x,t,{\bm{\theta}})|$
and $ L_0 := \max_{{\bm{\theta}}\in\triangle, t\in[0,T], x\in\mathbb{R}^d} |\nabla_x \sigma^2(x, t, \theta)|$ where $t=T-\tau$.
\end{proposition}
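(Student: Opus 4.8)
The plan is to regard $\alpha$ as the value function of the parametric minimization problem (\ref{eq_alpha_def}) and to exploit the structure of its integrand. Writing $F(x,\tau,\varphi,\bm{\theta}):=-\mu(x,t,\bm{\theta})+\frac{\varphi}{2}\sigma(x,t,\bm{\theta})^2$ with $t=T-\tau$, we have $\alpha(x,\tau,\varphi)=\min_{\bm{\theta}\in\triangle}F(x,\tau,\varphi,\bm{\theta})$. Since $\triangle$ is compact and $F$ is jointly continuous (the data $\mu,\sigma^2$ being globally Lipschitz, hence continuous), the minimum is attained at some $\bm{\theta}^*=\bm{\theta}^*(x,\tau,\varphi)$, and continuity of $\alpha$ follows from Berge's maximum theorem. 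I would then prove the three assertions in turn: the two-sided bound (\ref{lipschitz}) (which also yields strict monotonicity), the gradient bound (\ref{lipschitz-x}), and joint Lipschitz regularity giving $\alpha\in C^{0,1}(\mathcal{D})$.

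For (\ref{lipschitz}) I would use a comparison-of-minimizers argument. Fix $(x,\tau)$ and $\varphi_1<\varphi_2$ with minimizers $\bm{\theta}_1,\bm{\theta}_2$. Combining $\alpha(x,\tau,\varphi_2)\le F(x,\tau,\varphi_2,\bm{\theta}_1)$ with $\alpha(x,\tau,\varphi_1)=F(x,\tau,\varphi_1,\bm{\theta}_1)$ and the uniform bound $\omega\le\frac12\sigma^2\le L$ gives
\[
\alpha(x,\tau,\varphi_2)-\alpha(x,\tau,\varphi_1)\le F(x,\tau,\varphi_2,\bm{\theta}_1)-F(x,\tau,\varphi_1,\bm{\theta}_1)=\tfrac{\varphi_2-\varphi_1}{2}\sigma(x,t,\bm{\theta}_1)^2\le L(\varphi_2-\varphi_1).
\]
The symmetric choice of $\bm{\theta}_2$ in the suboptimality inequality produces the matching lower bound with $\omega$. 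This sandwiches the difference quotient between $\omega$ and $L$, proving (\ref{lipschitz}), strict monotonicity, and $\omega\le\alpha^\prime_\varphi\le L$ wherever the derivative exists (which is a.e., since $\alpha$ is monotone and Lipschitz in $\varphi$).

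For the gradient bound (\ref{lipschitz-x}) I would first note that each $F(\cdot,\tau,\varphi,\bm{\theta})$ is Lipschitz in $x$ with a constant uniform in $\bm{\theta}$ (namely $L_\mu+\frac{|\varphi|}{2}L_\sigma$, where $L_\mu,L_\sigma$ are the $x$-Lipschitz constants of $\mu,\sigma^2$), so the pointwise minimum $\alpha(\cdot,\tau,\varphi)$ inherits the same Lipschitz constant and is therefore differentiable in $x$ a.e. by Rademacher's theorem. At a point of differentiability, an envelope argument identifies the gradient: for a minimizer $\bm{\theta}^*$ the map $x\mapsto F(x,\tau,\varphi,\bm{\theta}^*)-\alpha(x,\tau,\varphi)$ is nonnegative and vanishes at the base point, hence has zero gradient there, so
\[
\nabla_x\alpha=\nabla_x F(x,\tau,\varphi,\bm{\theta}^*)=-\nabla_x\mu(x,t,\bm{\theta}^*)+\tfrac{\varphi}{2}\nabla_x\sigma^2(x,t,\bm{\theta}^*).
\]
Bounding each term by its maximum over $\triangle$ yields $|\nabla_x\alpha|\le p(x,\tau)+\frac{L_0}{2}|\varphi|\le p(x,\tau)+L_0|\varphi|$, which is (\ref{lipschitz-x}). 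Running the same minimum-of-Lipschitz-functions estimate in the $\tau$-variable and combining it with the $\varphi$-Lipschitz constant $L$ from the previous step gives joint (local) Lipschitz continuity, i.e. $\alpha\in C^{0,1}(\mathcal{D})$.

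The step I expect to be the main obstacle is the envelope (Danskin-type) identification of $\nabla_x\alpha$. The minimizer set $\Theta^*(x,\tau,\varphi)$ need not be a singleton, and at points of non-uniqueness $\alpha(\cdot,\tau,\varphi)$ may fail to be differentiable; however, such points form a null set by Rademacher's theorem, so the identity $\nabla_x\alpha=\nabla_x F(\cdot,\bm{\theta}^*)$, and hence (\ref{lipschitz-x}), holds for a.e. $(x,\tau,\varphi)\in\mathcal{D}$, which is precisely the form asserted. Under the stronger regularity $\mu,\sigma^2\in C^1$ indicated in the motivating discussion, $\nabla_x F$ is continuous and this step reduces to a direct application of Danskin's theorem.
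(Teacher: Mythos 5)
Your proof is correct and, at the structural level, follows the same path as the paper: both regard $\alpha$ as the pointwise minimum of the family $\alpha^{\bm{\theta}}(x,\tau,\varphi)=-\mu(x,t,{\bm{\theta}})+\frac{\varphi}{2}\sigma(x,t,{\bm{\theta}})^2$ and exploit suboptimality comparisons. For (\ref{lipschitz}) the paper merely remarks that each $\alpha^{\bm{\theta}}$ satisfies the two-sided bound ``and so does the minimal function''; your comparison-of-minimizers computation (test the minimizer for $\varphi_1$ at $\varphi_2$ and vice versa) is exactly the justification that one-line remark requires, so there you differ only in explicitness. The genuine divergence is in (\ref{lipschitz-x}). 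The paper never identifies $\nabla_x\alpha$: it writes the increment of each $\alpha^{\bm{\theta}}$ along a coordinate segment $x_2=x_1+h e^i$ as an integral of partial derivatives, bounds it, takes the minimum over ${\bm{\theta}}$, exchanges the roles of $x_1$ and $x_2$, and lets $h\to 0$; this uses only absolute continuity of Lipschitz functions along lines, so it survives under the literal Lipschitz-only hypotheses. Your route --- Rademacher plus the Danskin/envelope identification $\nabla_x\alpha=\nabla_x F(\cdot,{\bm{\theta}}^*)$ in your notation --- buys more (an exact gradient formula and the sharper constant $L_0/2$), but, as you yourself flag, it requires $F(\cdot,{\bm{\theta}}^*)$ to be differentiable at the same point where $\alpha$ is differentiable. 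Rademacher alone does not supply this: the null set on which $F(\cdot,{\bm{\theta}})$ fails to be differentiable depends on ${\bm{\theta}}$, hence on $x$ through ${\bm{\theta}}^*(x)$, so under Lipschitz-only data your ``a.e.''\ conclusion needs an extra argument, the simplest repair being precisely the paper's segment-integral estimate. Since the proposition's statement already presumes that $\nabla_x\mu$ and $\nabla_x\sigma^2$ exist (they appear in the definitions of $p$ and $L_0$), and since the paper takes $\mu,\sigma^2\in C^1$ wherever it actually invokes this result, your $C^1$ reading is legitimate and your proof is complete under it; just be aware that the paper's increment argument is the one that works at the stated level of regularity.
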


\begin{proof}
Let us define $\alpha^{\bm{\theta}}(x,\tau,\varphi) := 
-\mu(x,t,{\bm{\theta}}) +  \frac{\varphi}{2}\sigma(x,t,{\bm{\theta}})^2$, where $t=T-\tau$. Then
\[
\alpha(x,\tau,\varphi) = \min_{ {\bm{\theta}} \in \triangle} \alpha^{\bm{\theta}}(x,\tau,\varphi) \,.
\]
For any given ${\bm{\theta}}\in \triangle$,  the function $\alpha^{\bm{\theta}}(x,\tau,\varphi)$ is globally Lipschitz continuous in all variables. The minimal function $\alpha$ is therefore globally Lipschitz continuous as well. Moreover, the function $\alpha^{\bm{\theta}}(x,\tau,\varphi)$ satisfies the inequality (\ref{lipschitz}) for any ${\bm{\theta}}\in \triangle$, and so does the minimal function $\alpha$.

Next, we prove inequality (\ref{lipschitz-x}). Let $ x_1 ,x_2\in\mathbb{R}^d$ such that $x_2 = x_1 + h e^i$, where $e^i, i=1,\cdots, d,$ is the standard normal vector, i.e., $ e^i = (0,0,...,0,1,0,...,0)^T$. We have that 
\begin{eqnarray*}
&& \alpha^{\bm{\theta}}(x_1,\tau,\varphi)  - \alpha^{\bm{\theta}}(x_2,\tau,\varphi) = -(\mu(x_1,\tau,{\bm{\theta}})- \mu(x_2,\tau,{\bm{\theta}})) 
+ \frac{\varphi}{2} (\sigma(x_1,\tau,{\bm{\theta}})^2- \sigma(x_2,\tau,{\bm{\theta}})^2) 
\\
&& = \int_{0}^{h} (-\partial_{x_i}\mu(x_1+\xi e^i,\tau,{\bm{\theta}})) d\xi + \int_{0}^{h}  \frac{\varphi}{2} \partial_{x_i}\sigma^2(x_1+\xi e^i,\tau,{\bm{\theta}}) d\xi
\\
&&
\leq \int_{0}^{h} |\partial_{x_i}\mu(x_1+\xi e^i,\tau,{\bm{\theta}}))| d\xi + \int_{0}^{h}  \frac{|\varphi|}{2} |\partial_{x_i}\sigma^2(x_1+\xi e^i,\tau,{\bm{\theta}})| d\xi
\\
&& 
\le \max_{\bm{\theta}\in\triangle, 0\le\xi\le h} |\partial_{x_i}\mu(x_1+\xi e^i,\tau,{\bm{\theta}})| \, h  + \max_{{\bm{\theta}}\in\triangle, x\in\mathbb{R}^d} |\partial_{x_i} \sigma^2(x, \tau, \theta)|\frac{|\varphi|}{2}h.
\end{eqnarray*}
Hence,
\[
\alpha^{\bm{\theta}}(x_1,\tau,\varphi) \le  \alpha^{\bm{\theta}}(x_2,\tau,\varphi) +  \max_{{\bm{\theta}}\in\triangle, 0\le\xi\le h} |\partial_{x_i}\mu(x_1+\xi e^i,\tau,{\bm{\theta}})| \, h + \max_{{\bm{\theta}}\in\triangle, x\in\mathbb{R}^d} |\partial_{x_i} \sigma^2(x, \tau, \theta)||\varphi|h.
\]
We note that $x_2 - x_1 = h e^i$ so that $ |x_2 - x_1| = h$. Taking minimum over ${\bm{\theta}}\in\triangle$, we obtain 
\[
\alpha(x_1,\tau,\varphi) \le  \alpha(x_2,\tau,\varphi) +  \max_{\bm{\theta}\in\triangle,  0\le\xi\le h} |\partial_{x_i}\mu(x_1+ \xi e^i,\tau,{\bm{\theta}})| \, h + \max_{{\bm{\theta}}\in\triangle, x\in\mathbb{R}^d} |\partial_{x_i} \sigma^2(x, \tau, \theta)||\varphi| h .
\]
Exchanging the role of $x_1$ and $x_2$ and taking the limit as $x_2\to x_1$, i.e., $h\to 0$, we obtain inequality (\ref{lipschitz-x}), as stated.

\end{proof}

According to Proposition~\ref{smootheness0}, the value function $\alpha$ given in (\ref{eq_alpha_def}) fulfils the assumptions of Theorem~\ref{th:alpha-existence} provided that the functions 
\[
p(x,\tau)= \max_{{\bm{\theta}}\in\triangle} |\nabla_x\mu(x,\tau,{\bm{\theta}})|, \quad\text{and}\ 
h(x,\tau)= \alpha(x,\tau,0) = -\max_{{\bm{\theta}}\in\triangle} \mu(x,\tau,{\bm{\theta}})
\]
belong to the Banach space $L^\infty((0,T); H)$.\\

The next result was proved in \cite{KilianovaSevcovicJIAM}. It gives sufficient conditions imposed on the decision set $\triangle$ and functions $\mu$ and $\sigma$ guaranteeing higher smoothness of the value function $\alpha$. Its proof is based on the classical envelope theorem due to Milgrom and Segal \cite{milgrom_segal2002} and the result on Lipschitz continuity of the minimizer $\hat{\bm{\theta}}=\hat{\bm{\theta}}(x,\tau,\varphi)$ belonging to a convex compact set $\triangle$ due to Klatte \cite{Klatte}.

\begin{theorem}\cite[Theorem 1]{KilianovaSevcovicJIAM}
\label{smootheness1}
Suppose that $\triangle\subset\mathbb{R}^n$ is a convex compact set, and the functions $\mu(x,t,{\bm{\theta}})$ and $\sigma(x,t,{\bm{\theta}})^2$ are $C^{1,1}$ smooth such that the objective function 
$f(x,t,\varphi, {\bm{\theta}}):= - \mu(x,t,{\bm{\theta}}) + \frac{\varphi}{2} \sigma(x,t,{\bm{\theta}})^2$ is strictly convex in the variable ${\bm{\theta}}\in\triangle$ for any  $\varphi\in(\varphi_{min}, \infty)$, then the function $\alpha$ belongs to the space $C^{1,1}(\mathcal{D})$.
\end{theorem}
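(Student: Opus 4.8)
The plan is to write $\alpha$ as the value of a parametric strictly convex program and then to combine the two tools cited before the statement: the Milgrom--Segal envelope theorem, which expresses $\nabla\alpha$ through a partial gradient of the objective frozen at the minimizer, and Klatte's theorem, which makes that minimizer Lipschitz in the parameters. Put $f(x,t,\varphi,\bm{\theta}) = -\mu(x,t,\bm{\theta}) + \frac{\varphi}{2}\sigma(x,t,\bm{\theta})^2$ with $t = T-\tau$, so that $\alpha(x,\tau,\varphi) = \min_{\bm{\theta}\in\triangle} f(x,t,\varphi,\bm{\theta})$. Since $\triangle$ is convex and compact and $f$ is strictly convex in $\bm{\theta}$ for each $(x,\tau,\varphi)\in\mathcal{D}$, the minimizer exists and is unique; I denote it $\hat{\bm{\theta}} = \hat{\bm{\theta}}(x,\tau,\varphi)$, so that $\alpha(x,\tau,\varphi) = f(x,t,\varphi,\hat{\bm{\theta}}(x,\tau,\varphi))$. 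Uniqueness together with compactness of $\triangle$ already yields continuity of $\hat{\bm{\theta}}$ in $(x,\tau,\varphi)$.

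First I would differentiate $\alpha$ in the parameters $(x,\tau,\varphi)$ by the envelope theorem of Milgrom and Segal \cite{milgrom_segal2002}. Because the feasible set $\triangle$ does not depend on the parameters, the envelope identity reads
\[
\nabla_{(x,\tau,\varphi)}\alpha(x,\tau,\varphi) = \bigl(\nabla_{(x,\tau,\varphi)} f\bigr)\bigl(x,t,\varphi,\hat{\bm{\theta}}(x,\tau,\varphi)\bigr),
\]
the terms stemming from the variation of $\hat{\bm{\theta}}$ dropping out even when $\hat{\bm{\theta}}$ lies on the boundary of $\triangle$; in particular $\alpha^\prime_\varphi = \tfrac12\sigma(x,t,\hat{\bm{\theta}})^2$ and $\nabla_x\alpha = -\nabla_x\mu + \tfrac{\varphi}{2}\nabla_x\sigma^2$, both evaluated at $\hat{\bm{\theta}}$. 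As $\mu,\sigma^2$ are $C^{1,1}$ their first partials are continuous, and $\hat{\bm{\theta}}$ is continuous, so the right-hand side is continuous and $\alpha\in C^1(\mathcal{D})$.

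To upgrade $C^1$ to $C^{1,1}$ I would compose two Lipschitz estimates. On the one hand, the $C^{1,1}$ regularity of $\mu$ and $\sigma^2$ makes the map $(x,\tau,\varphi,\bm{\theta})\mapsto\nabla_{(x,\tau,\varphi)}f$ Lipschitz continuous. On the other hand, Klatte's theorem \cite{Klatte}, applied to the strictly convex $C^{1,1}$ program over the fixed convex compact set $\triangle$, gives Lipschitz continuity of the minimizer $(x,\tau,\varphi)\mapsto\hat{\bm{\theta}}(x,\tau,\varphi)$. The gradient $\nabla\alpha$ is the composition of these two Lipschitz maps, and a composition of Lipschitz maps is Lipschitz; hence $\nabla\alpha$ is Lipschitz continuous and $\alpha\in C^{1,1}(\mathcal{D})$.

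The hard part is verifying the hypotheses behind Klatte's estimate and tracking the uniformity of the constants on $\mathcal{D}$, whose $\varphi$-range $(\varphi_{min},\infty)$ is unbounded. Klatte's Lipschitz bound rests on a uniform second-order (strong convexity) condition in $\bm{\theta}$ rather than on pointwise strict convexity alone, so one has to extract a strong-convexity modulus bounded away from zero on $\mathcal{D}$ from the $C^{1,1}$ bounds together with the ellipticity $\omega\le\tfrac12\sigma^2$ inherited from Proposition~\ref{smootheness0}. A companion subtlety is that the $\bm{\theta}$-Hessian of $f$ grows like $\varphi$ as $\varphi\to\infty$, so the sensitivity $\partial_\varphi\hat{\bm{\theta}}$ decays like $1/\varphi$; this decay is exactly what keeps the mixed and $\varphi$-derivatives of $\nabla\alpha$ bounded, while the pure $x$-derivatives still carry the explicit linear factor in $\varphi$ already visible in the bound (\ref{lipschitz-x}), so that ``$C^{1,1}$'' must be read in the same growth-controlled sense as the Lipschitz claim of Proposition~\ref{smootheness0}.
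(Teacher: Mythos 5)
The paper gives no proof of its own here: it cites \cite[Theorem 1]{KilianovaSevcovicJIAM} and only remarks that the proof rests on the Milgrom--Segal envelope theorem together with Klatte's result on Lipschitz continuity of the minimizer $\hat{\bm{\theta}}$ over the convex compact set $\triangle$, which is precisely the two-step strategy you carry out (envelope identity for $C^1$, then composition of Lipschitz maps for $C^{1,1}$). Your proposal is correct and takes essentially the same approach as the one the paper attributes to the cited reference, including the appropriate caution that Klatte's estimate requires a uniform second-order condition and growth control in $\varphi$ rather than bare pointwise strict convexity.
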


\subsection{Point-wise a-priori estimates of solutions, their existence and uniqueness}

In this section we present a-priori estimates on a solution $\varphi$ to the Cauchy problem (\ref{eq_PDEphi-cons}).
Throughout this section we will assume that the function $\mu$ is independent of time $t\in[0,T]$, and $\sigma$ is independent of $x\in\mathbb{R}$ and $t\in[0,T]$, i.e., 
\[
\mu=\mu(x,\bm{\theta}), \qquad \sigma = \sigma(\bm{\theta}).
\]
Then the value function $\alpha=\alpha(x,\varphi)$ is independent of the $\tau=T-t$ variable, as well. 

In what follows we will prove a-priori estimates for the transformed function $\psi=\psi(x,\tau)$ defined as $\psi(x,\tau)=\alpha(x,\varphi(x,\tau))$. Since $\alpha$ is strictly increasing function in the $\varphi$ variable, there exists an inverse function $\beta(x,\psi)$ such that $\alpha(x,\beta(x,\psi)) = \psi$. Straightforward calculations show that the function $\varphi(x,\tau)$ is a solution to (\ref{eq_PDEphi-cons}) if and only if the function $\psi(x,\tau)$ is a solution to the following linear parabolic PDE: 
\[
-\partial_\tau\psi + a(x,\tau) \partial_x^2 \psi + b(x,\tau) \partial_x \psi + c(x,\tau)\psi  = 0,
\]
where
\begin{eqnarray*}
a(x,\tau) &=& \alpha^\prime_\varphi(x,\varphi(x,\tau)), 
\quad
b(x,\tau) = - \alpha^\prime_\varphi(x,\varphi(x,\tau)) \varphi(x,\tau) - \alpha(x,\varphi(x,\tau)),
\\
c(x,\tau) &=& \alpha^\prime_x (x,\varphi(x,\tau)).
\end{eqnarray*}
Notice that $0<\omega \le a(x,\tau)\le L$. 
Suppose that the function $c(x,\tau)$ is bounded from above by a constant $\lambda\ge 0$. Then the function $\psi_\lambda(x,\tau) = \psi(x,\tau) e^{-\lambda\tau}$
is a solution to the linear PDE:
\[
{\mathcal L}[\psi_\lambda] = 0, \qquad \text{where}\ \  {\mathcal L}[\psi_\lambda] \equiv \partial_\tau\psi_\lambda - a(x,\tau) \partial_x^2 \psi_\lambda - b(x,\tau) \partial_x \psi_\lambda - c_\lambda(x,\tau)\psi_\lambda,
\]
where $c_\lambda(x,\tau)= c(x,\tau) -\lambda$ is nonpositive, i.e., $c_\lambda (x,\tau) \le0$ for all $x,\tau$.
Let $\underline{\psi}\le 0$ be a constant. Then ${\mathcal L}[\psi_\lambda - \underline{\psi}] =  c_\lambda(x,\tau)\underline{\psi}\ge 0$. Applying the maximum principle for parabolic equations on unbounded domains \cite[Theorem 3.4]{Meyer} due to Meyer and Needham, we obtain $\psi_\lambda(x,\tau) - \underline{\psi}\ge 0$ for all $x,\tau$ provided that $\psi_\lambda(x,0) - \underline{\psi}=\psi(x,0) - \underline{\psi}\ge 0$ for all $x$. That is, $\underline{\psi}$ is a subsolution. Similarly, if $\overline{\psi}\ge 0$ is a given constant, then ${\mathcal L}[\psi_\lambda - \overline{\psi}] =  c_\lambda(x,\tau)\overline{\psi}\le 0$ and $\psi_\lambda(x,\tau) - \overline{\psi}\le 0$ for all $x,\tau$ provided that $\psi(x,0) - \overline{\psi}\le 0$ for all $x$, i.e., $\overline{\psi} $ is a supersolution. In summary, we have the following implication:
\[
\underline{\psi}\le \psi(x,0) \le \overline{\psi} \quad \Longrightarrow \quad 
\underline{\psi} e ^{\lambda \tau}\le \psi(x,\tau) \le \overline{\psi} e ^{\lambda \tau} \quad\text{for all}\ x\in\mathbb{R}, \tau\in[0,T].
\]
In terms of the solution $\varphi$ to the Cauchy problem (\ref{eq_PDEphi-cons})--(\ref{init_PDEphi-cons}), we have the following a-priori estimate:
\begin{equation}
\underline{\psi} e ^{\lambda \tau}\le \alpha(x,\varphi(x,\tau)) \le \overline{\psi} e ^{\lambda \tau} \quad\text{for all}\ x\in\mathbb{R}, \tau\in[0,T], 
\label{pointwise}
\end{equation}
where 
\begin{equation}
\underline{\psi} = \min\{ 0, \inf_{x\in\mathbb{R}} \alpha(x,\varphi(x,0)) \}, \qquad 
\overline{\psi} = \max\{0, \sup_{x\in\mathbb{R}} \alpha(x,\varphi(x,0)) \}. 
\label{psiplusminus}
\end{equation}

Now, we are in a position to apply the general Theorem~\ref{th:alpha-existence} on existence and uniqueness of a solution. 

\begin{theorem}
Let the decision set $\triangle \subset \mathbb{R}^n$ be compact and the function $u:\mathbb{R}\to\mathbb{R}$ be an increasing utility function such that $\varphi_0(x) = -u''(x)/u'(x)$ belongs to the space $L^2(\mathbb{R})\cap L^\infty(\mathbb{R})$. Suppose that the drift $\mu(x, \bm{\theta})$ and volatility function $\sigma^2(\bm{\theta})>0$ are $C^1$ continuous in the $x$ and $\bm{\theta}$ variables, and the value function $\alpha(x,\varphi)$ given in (\ref{eq_alpha_def}) satisfies $p\in L^2(\mathbb{R})\cap L^\infty(\mathbb{R}), h\in L^\infty(\mathbb{R}), \; \text{and} \;\partial^2_x h \in L^2(\mathbb{R})$, where  
\begin{eqnarray*}
&&p(x)= \max_{{\bm{\theta}}\in\triangle} |\partial_x\mu(x,{\bm{\theta}})|, 
\quad 
h(x)=  -\max_{{\bm{\theta}}\in\triangle} \mu(x,{\bm{\theta}}).
\end{eqnarray*}
Then for any $T>0$ there exists a unique solution $\varphi$ of the Cauchy problem 

\begin{equation}
\label{finalequation}
\partial_\tau \varphi - \partial^2_x \alpha(\cdot,\varphi) = - \partial_x \left( \alpha(\cdot,\varphi)\varphi\right),
\quad \varphi(x,0) = \varphi_0(x),\quad (x,\tau)\in\mathbb{R}\times(0,T), 
\end{equation}
satisfying $\varphi\in C([0,T]; H)\cap L^2((0,T); V)\cap L^\infty((0,T)\times \mathbb{R})$.

\end{theorem}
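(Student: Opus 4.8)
The plan is to reduce (\ref{finalequation}) to the abstract Cauchy problem (\ref{equ:g_0,g_1}) and invoke Theorem~\ref{th:alpha-existence}. Two features obstruct a direct reduction: the datum $h=\alpha(\cdot,0)$ is only assumed to lie in $L^\infty(\mathbb{R})$ rather than in $H=L^2$, and the flux $\alpha(\cdot,\varphi)\varphi$ grows quadratically in $\varphi$. I would first dispose of the $L^2$ issue by the shift $\tilde\alpha(x,\varphi):=\alpha(x,\varphi)-h(x)$, for which $\tilde\alpha(\cdot,0)=0\in H$. Since $A=I-\partial_x^2$ and $\partial_x^2\alpha=\partial_x^2\tilde\alpha+\partial_x^2 h$, the equation takes the form
\begin{equation*}
\partial_\tau\varphi+A\tilde\alpha(\cdot,\varphi)=g_0(\tau,\varphi)+\partial_x g_1(\tau,\varphi),
\end{equation*}
with $g_0(\tau,\varphi)=\tilde\alpha(\cdot,\varphi)+\partial_x^2 h$ and $g_1(\tau,\varphi)=-\alpha(\cdot,\varphi)\varphi$. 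By Proposition~\ref{smootheness0} the function $\tilde\alpha$ inherits the monotonicity bounds $\omega\le\tilde\alpha'_\varphi\le L$ of (\ref{lipschitz}); moreover, because here $\sigma=\sigma(\bm\theta)$ is independent of $x$ we have $L_0=0$ in (\ref{lipschitz-x}), so $|\nabla_x\tilde\alpha|\le 2p(x)$ with $2p\in L^2$. Together with the hypothesis $\partial_x^2 h\in L^2$ this shows that $\tilde\alpha$ fulfils the assumptions on the diffusion function and that $g_0$ is globally Lipschitz from $H$ to $H$ with $g_0(\tau,0)=\partial_x^2 h\in H$.

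The genuine obstacle is $g_1$. From $|\alpha(x,\varphi)\varphi|\le|h(x)||\varphi|+L|\varphi|^2$ one sees that $\varphi\mapsto\alpha(\cdot,\varphi)\varphi$ is \emph{not} globally Lipschitz on $H$, so Theorem~\ref{th:alpha-existence} cannot be applied as it stands. I would circumvent this by truncation: for $R>0$ set $T_R(s)=\max\{-R,\min\{R,s\}\}$ and replace $g_1$ by $g_1^R(\tau,\varphi)=-\alpha(\cdot,T_R(\varphi))\,T_R(\varphi)$. As $T_R$ is $1$-Lipschitz and bounded and $h\in L^\infty$, the operator $g_1^R$ is a product of uniformly bounded, globally Lipschitz Nemytskii maps, hence globally Lipschitz from $H$ to $H$ with $g_1^R(\tau,0)=0$. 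Theorem~\ref{th:alpha-existence} and Theorem~\ref{cor:abs-continuous} then produce a unique $\varphi^R\in\mathcal V\cap C([0,T];H)$ solving the truncated problem, which agrees with (\ref{finalequation}) wherever $|\varphi^R|\le R$.

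To remove the truncation I would appeal to the point-wise a-priori bound (\ref{pointwise})--(\ref{psiplusminus}). The decisive structural point is that under the standing assumptions $\mu=\mu(x,\bm\theta)$, $\sigma=\sigma(\bm\theta)$ the zeroth-order coefficient $c=\alpha'_x$ of the linear equation satisfied by $\psi:=\alpha(\cdot,\varphi^R)$ is bounded above by $\lambda:=\|p\|_\infty$ \emph{independently of $R$}, while the diffusion coefficient $a=\alpha'_\varphi$ remains in $[\omega,L]$ and the first-order term does not affect the sign structure of the comparison argument. The maximum principle on $\mathbb{R}$ of Meyer and Needham (\cite[Theorem 3.4]{Meyer}) then bounds $\psi$ in $L^\infty((0,T)\times\mathbb{R})$ by $M_0 e^{\lambda T}$, where $M_0=\|h\|_\infty+L\|\varphi_0\|_\infty\ge\sup_x|\alpha(x,\varphi_0(x))|$. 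Inverting $\alpha$ in the $\varphi$ variable, using $\alpha'_\varphi\ge\omega$ and $|\beta(x,0)|\le\|h\|_\infty/\omega$, converts this into a bound $\|\varphi^R\|_{L^\infty((0,T)\times\mathbb{R})}\le R_0$ depending only on the data. Choosing any $R>R_0$ makes the truncation inactive, so $\varphi:=\varphi^R$ solves (\ref{finalequation}) and belongs to $L^\infty((0,T)\times\mathbb{R})$; the remaining regularity $\varphi\in C([0,T];H)\cap L^2((0,T);V)$ comes from Theorems~\ref{th:alpha-existence} and~\ref{cor:abs-continuous}. Uniqueness follows since any solution in the asserted class obeys the same bound $R_0$ and therefore, for $R>R_0$, solves the truncated problem, whose solution is unique.

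I expect the main difficulty to be precisely this reconciliation of the truncation with the a-priori estimate: one must ensure that the maximum-principle bound genuinely holds for the truncated solution with a constant independent of $R$, which requires either verifying that the transformed linear equation retains the sign of $c$ and uniform parabolicity in the region where the truncation is active, or running the comparison argument only where $|\varphi^R|\le R$ and closing it through the choice $R>R_0$. A secondary technical point is justifying enough regularity of the weak solution $\varphi^R$ (equivalently of $\psi$) to legitimately apply the parabolic maximum principle on the unbounded domain.
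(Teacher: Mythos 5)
Your overall architecture is the same as the paper's: shift the diffusion function $\tilde\alpha(x,\varphi)=\alpha(x,\varphi)-h(x)$, set $g_0=\tilde\alpha(\cdot,\varphi)+\partial_x^2 h$, observe that the quadratic flux $-\alpha(\cdot,\varphi)\varphi$ is not globally Lipschitz on $H$, truncate it, apply Theorems~\ref{th:alpha-existence} and~\ref{cor:abs-continuous}, and finally use the point-wise estimate (\ref{pointwise}) to show the truncation never activates. The gap lies in the removal step, and it is created by your choice of truncation. Because you truncate the solution variable, $g_1^R(\varphi)=-\alpha(\cdot,T_R(\varphi))\,T_R(\varphi)$, the function $\psi=\alpha(\cdot,\varphi^R)$ does \emph{not} satisfy the linear equation of Section~3.3 on all of $\mathbb{R}\times(0,T)$: on the set $\{|\varphi^R|>R\}$ the flux equals $\alpha(x,\pm R)(\pm R)$, and its spatial derivative contributes the zeroth-order term $-\alpha^\prime_\varphi(x,\varphi^R)\,\alpha^\prime_x(x,\pm R)(\pm R)$ rather than $\alpha^\prime_x\,\psi$. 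Hence your ``decisive structural point'' that $c=\alpha^\prime_x\le\lambda=\Vert p\Vert_{\infty}$ independently of $R$ is not what the truncated equation actually gives; rewriting that term as $c\,\psi$ forces a division by $\psi$, and the best uniform upper bound one gets is of order $L\lambda R/(\omega R-\Vert h\Vert_\infty)\approx L\lambda/\omega$, not $\lambda$. This is repairable (the constant is still $R$-independent for large $R$, so your scheme closes with a larger $\lambda^\prime$ and a correspondingly larger $R_0$), but you did not carry it out: you explicitly list it as the ``main difficulty'' and leave both suggested routes unexecuted, so the crux of the proof is asserted rather than proved.

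The paper's truncation is designed to avoid exactly this issue. It cuts off the factor $\alpha(\cdot,\varphi)$, not $\varphi$: $g_1(\varphi)=-w(\alpha(\cdot,\varphi))\,\varphi$, where $w$ saturates at the levels $\pm M e^{\lambda T}$, $M=\sup_x|\alpha(x,\varphi_0(x))|$, i.e., precisely the levels delivered by (\ref{pointwise})--(\ref{psiplusminus}). This map is still globally Lipschitz on $H$ --- not because both factors are bounded (here $\varphi$ is not), but because of the coercivity (\ref{monotonicity}): wherever $w$ is unsaturated one automatically has $|\varphi|\le\omega^{-1}(Me^{\lambda T}+\Vert h\Vert_\infty)$, and wherever it is saturated the map is linear in $\varphi$. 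Moreover, in the transformed equation the truncated flux $w(\psi)\varphi$ produces the zeroth-order coefficient $\alpha^\prime_x\, w(\psi)/\psi$, and since $0\le w(\psi)/\psi\le 1$ the bound $c\le\lambda$ of Section~3.3 survives verbatim; the comparison argument then yields $|\psi|\le Me^{\lambda T}$, so $w(\psi)=\psi$ and the truncation is inactive with no limit in $R$ and no recomputation of constants. If you keep the $T_R$-truncation, you must supply the missing computation of the zeroth-order coefficient in the region where the truncation is active; in either case, the regularity needed to invoke the Meyer--Needham maximum principle for a weak solution (your ``secondary technical point'') is a genuine gloss, one the paper shares.
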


\begin{proof}

Since $\sigma^2(\bm{\theta})>0$ and $\triangle$ is a compact set, there exist constants $0<\omega\le L$ such that $0<\omega\le \sigma^2(\bm{\theta})\le L$ for all $\bm{\theta}\in \triangle$. It follows from Proposition~\ref{smootheness0} that 
\begin{equation}
\omega|\varphi| \le |\alpha(x,\varphi) - \alpha(x,0)| \le L |\varphi|.
\label{monotonicity}
\end{equation}
Since $\varphi_0, h\in L^\infty(\mathbb{R})$ and $h(x) = \alpha(x,0)$, we obtain 
$M:=\sup_{x\in\mathbb{R}} |\alpha(x,\varphi_0(x))| <\infty$.

Let us define the shifted diffusion function by $\tilde\alpha(x,\varphi)=\alpha(x,\varphi) - \alpha(x,0)$. Notice that $\alpha(x,0)= \min_{{\bm{\theta}}\in\triangle} -\mu(x,{\bm{\theta}}) = h(x)$. Then equation (\ref{finalequation}) is equivalent to 
\[
\partial_\tau \varphi  + A \tilde \alpha(\cdot,\varphi) = 
\tilde \alpha(\cdot,\varphi)+ \partial^2_x h - \partial_x \left( \alpha(\cdot,\varphi)\varphi\right),
\]
where $A=I-\partial^2_x$.

Next, let $g_0(\varphi) = \tilde \alpha(\cdot,\varphi)+ \partial^2_x h$ and $g_1(\varphi) = -w(\alpha(\cdot,\varphi)) \varphi$. Here, $w:\mathbb{R}\to\mathbb{R}$ is a suitable cut-off function 
\[
w(\alpha) = \left\{ 
\begin{array}{ll}
\underline{\psi} e^{\lambda T}, & \text{if}\ \ \alpha  \le \underline{\psi} e^{\lambda T},  \\
\alpha, &  \text{if} \ \ \underline{\psi} e^{\lambda T} < \alpha < \overline{\psi} e^{\lambda T}, \\
\overline{\psi} e^{\lambda T}, & \text{if}\ \ \alpha \ge \overline{\psi} e^{\lambda T},  \\
\end{array}
\right.
\]
where $\overline{\psi}=M, \underline{\psi}=-M$.
Then the functions $g_0, g_1:H\to H$ are globally Lipschitz continuous. 

Notice that the diffusion function $\tilde\alpha$ fulfills assumptions of Theorem~\ref{th:alpha-existence} with $\tilde h(x)=\tilde\alpha(x,0)\equiv 0$. Now applying Theorems~\ref{th:alpha-existence} and \ref{cor:abs-continuous} we obtain the existence and uniqueness of a solution $\varphi\in C([0,T]; H)\cap L^2((0,T); V)$ to the Cauchy problem (\ref{equ:g_0,g_1}). The solution $\varphi$ satisfies the point-wise estimate (\ref{pointwise}). Hence, $w(\alpha(x,\varphi(x,\tau))) = \alpha(x,\varphi(x,\tau))$ and $\varphi$ is a solution to the Cauchy problem (\ref{finalequation}), as well. 

Finally, from (\ref{monotonicity}) we deduce the $L^\infty((0,T)\times \mathbb{R})$ estimate for the solution $\varphi$ since $\sup_{x\in\mathbb{R}} |\alpha(x,\varphi(x,\tau))| \le M e^{\lambda \tau}$, where $\lambda = \sup_{x\in\mathbb{R}} p(x)$. Furthermore, $\varphi\in L^\infty((0,T)\times \mathbb{R})$, and
\[
\sup_{x\in\mathbb{R}, \tau\in [0,T]} |\varphi(x,\tau)| \le \omega^{-1} 
(M e^{\lambda T} + \max_{x\in\mathbb{R}} |h(x)|).
\]

\end{proof}

\subsection{Application to stochastic dynamic optimal portfolio selection problem}
\label{application}

As an example of the stochastic process (\ref{process_x}), one can consider a portfolio optimization problem with regular cash inflow is an inflow ($\varepsilon>0$)/outflow ($\varepsilon<0$) to a portfolio representing e.g., pension funds savings (c.f. Kilianov\'a and {\v S}ev{\v c}ovi{\v c} \cite{KilianovaSevcovicANZIAM}). In a stylized financial market the stochastic process $\{y_t \}_{t\ge0}$ driven by the stochastic differential equation
\begin{equation}
d y_t = (\varepsilon(y_t) + {\bm{\mu}}^T {\bm{\theta}} y_t)
d t + \sigma(\bm{\theta}) y_t d W_t, \label{processYeps}
\end{equation}
represents a stochastic evolution of the value of a synthetized portfolio $y_t$ consisting of $n$-assets with weights ${\bm{\theta}}=(\theta_1, \cdots, \theta_n)^T$, mean returns ${\bm{\mu}}=(\mu_1, \cdots, \mu_n)^T$, and an $n\times n$  positive definite covariance matrix ${\bm{\Sigma}}$, i.e., $\sigma({\bm{\theta}})^2 ={\bm{\theta}}^T {\bm{\Sigma}}  {\bm{\theta}}$.

We assume that the value $\varepsilon=\varepsilon(y)$ of the inflow/outflow rate also depends on the value $y$ in such a way that  $\varepsilon(y) = 0$ for very small values of $0<y\le y_-$ and $\varepsilon$ is a given constant inflow/outflow rate when the amount of saved money $y$ is sufficiently large, i.e., $y\ge y_+$, where $0<y_-<y_+$ and the function $\varepsilon$ is $C^1$ smooth for all $y>0$. It represents a realistic pension saving model in which there is no inflow/outflow provided that the value $y$ of the portfolio is very small. Based on the logarithmic transformation $x=\ln y$ and It\^{o}'s lemma the stochastic process $\{x_t\}$ satisfies (\ref{process_x}) where 
$\mu(x,{\bm{\theta}}) = {\bm{\mu}}^T{\bm{\theta}} - \frac12 \sigma({\bm{\theta}})^2  +\varepsilon(e^x)  e^{-x}$.

Further generalization of the drift and volatility functions arises from the so-called worst-case portfolio optimization problem investigated by Kilianov\'a and Trnovsk\'a \cite{KilianovaTrnovska}. The volatility function is given by
\[
\sigma({\bm{\theta}})^2 = \max_{{\bm{\Sigma}}\in{\mathcal K}}{\bm{\theta}}^T {\bm{\Sigma}}  {\bm{\theta}}, 
\]
where $\mathcal K$ is a bounded uncertainty convex set of positive definite covariance matrices. In general, only a part of the covariance matrix can be calculated precisely whereas entries are not precisely determined. For instance, if only the diagonal $d$ is known, we have ${\mathcal K}=\{{\bm{\Sigma}}\succ0, \ diag({\bm{\Sigma}})=d\}$. The drift function is given by
\[
\mu(x,{\bm{\theta}}) = \min_{{\bm{\mu}}\in{\mathcal E}}{\bm{\mu}}^T{\bm{\theta}} - \frac12 \sigma({\bm{\theta}})^2  +\varepsilon(e^x) e^{-x} ,
\]
where $\mathcal E$ is a given bounded uncertainty convex set of mean returns.

\begin{remark}

Let us consider a class of utility function characterized by a pair of exponential functions: 
\begin{equation}
u(x) = \begin{cases} - e^{-a_0 x} - c^*, & x \le x^\ast, \\ 
- (a_0/a_1) e^{-a_1 x + (a_1-a_0)x^\ast}, & x > x^\ast, \end{cases} 
\label{eq:utility_DARA}
\end{equation}
where $c^*=e^{-a_0 x^*}(a_0-a_1)/a_1$ and $a_0, a_1\in \mathbb{R}>0$ are given constants. Here, $x^\ast \in \mathbb{R}$ is a point at which the risk aversion changes. Note that $u$ is an increasing $C^1$ function having a jump in the second derivative at the point $x^\ast$. 

If $a_0>a_1>0$, then the utility function $u$ is called DARA (decreasing absolute risk aversion) function. It represents an investor with a non-constant, decreasing risk aversion: the higher the wealth, the lower their risk aversion and hence the higher  exposition of the portfolio to more risky assets. With regard to the paper \cite{Post} by Post, Fang and Kopa, the piece-wise exponential DARA utility functions play an important role in the analysis of decreasing absolute risk aversion stochastic dominance introduced by Vickson \cite{Vickson} (see also \cite{KilianovaSevcovicKybernetika}). Note that the coefficients of absolute risk aversion of the above utility functions $-u^{\prime\prime}(x)/u^{\prime}(x)$ is equal to $a_0$ if $x\le x^*$ or to $a_1$ if $x> x^*$. 

The piece-wise constant function $\varphi_0(x)=-u''(x)/u'(x)$ should be truncated outside of some interval $(-\gamma,\gamma)$, where $\gamma$ is large enough. Then $\varphi_0\in L^2(\mathbb{R})\cap L^\infty(\mathbb{R})$. The underlying utility function is therefore modified by linear functions for $x<-\gamma$ and $x>\gamma$.

Another simple example of a convex-concave utility function is the function $u(x)=\arctan(x)$. Then $\varphi_0(x) = -u''(x)/u'(x) = 2x/(1+x^2)$. Clearly, $\varphi_0\in H=L^2(\mathbb{R})$. It is worth noting that the individual's reduction in marginal utility arising from a loss is absolutely greater than the marginal utility from a financial gain. The utility function is concave (in the domain $x>x^*$), indicating that investors show risk aversion in the domain of gain.  However, investors become risk-seeker when dealing with losses, i.e., the utility function is convex for $x\leq x^*$. 
\end{remark}

\section{Numerical examples}

 First, let us consider a simple example of the decision set $\triangle=\{{\bm{\theta}}\in\mathbb{R}^2,\,  {\bm{\theta}} \ge0, \bm{1}^T {\bm{\theta}} =1\}, n=2, \mu({\bm{\theta}})={\bm{\mu}}^T{\bm{\theta}}, \sigma^2({\bm{\theta}})={\bm{\theta}}^T{\bm{\Sigma}}{\bm{\theta}}$, where ${\bm{\Sigma}}$ is a positive definite covariance matrix and ${\bm{\mu}}$ is a positive vector of mean return. The value function $\alpha=\alpha(\varphi)$ can be explicitly expressed as follows:
\[
\alpha(\varphi)=\left\{ 
\begin{array}{ll}
E^- \varphi + D^-,     & \hbox{if}\ 0<\varphi\le \varphi_*^-,\\
 A - \frac{B}{\varphi} + C\varphi,     & \hbox{if} \ \varphi_*^-<\varphi<\varphi_*^+, \\
E^+ \varphi + D^+,     & \hbox{if}\ \varphi_*^+\le \varphi,
\end{array}
\right.
\]
where $(\varphi_*^-,\varphi_*^+)$ is the maximal interval in which the optimal value $\hat{\bm{\theta}}(\varphi)\in\triangle$ of the function ${\bm{\theta}} \mapsto -{\bm{\mu}}^T{\bm{\theta}} +\frac{\varphi}{2} {\bm{\theta}}^T{\bm{\Sigma}}{\bm{\theta}}$ is strictly positive ($\hat{\bm{\theta}}(\varphi)>0$) for $\varphi\in (\varphi_*^-,\varphi_*^+)$, and $C, E^\pm>0$, $B\ge 0$, $A,D^\pm$ are constants explicitly depending on the covariance matrix ${\bm{\Sigma}}$ and the vector of mean return ${\bm{\mu}}$ such that the function $\alpha$ is $C^1$ continuous at $\varphi_*^\pm$, i.e., $E^\pm=B/(\varphi_*^\pm)^2+C$ and $D^\pm=A-B/\varphi_*^\pm +C\varphi^\pm - E^\pm\varphi_*^\pm$. It is clear that  $\alpha$ is only $C^{1,1}$ continuous function having two points $\varphi_*^\pm$ of discontinuity of the second derivative $\alpha''$. 

If we restrict the decision set to a set consisting of finite number of points, then the value function $\alpha(\varphi)$ is only piece-wise linear. Indeed,
if $\hat\triangle = \{{\bm{\theta}}^1, \cdots, {\bm{\theta}}^k\} \subset \{{\bm{\theta}}\in\mathbb{R}^2,\,  {\bm{\theta}} \ge0, \bm{1}^T {\bm{\theta}} =1\}$ then $\alpha(\varphi)=\min_{i=1,\cdots,k} \alpha^{i}(\varphi)$, where $\alpha^i(\varphi) = E^i \varphi + D^i$ is a linear function with the slope $E^i = (1/2) ({\bm{\theta}}^i)^T{\bm{\Sigma}} {\bm{\theta}}^i >0$ and intercept $D^i = - {\bm{\mu}}^T {\bm{\theta}}^i$.

Figure~\ref{fig:alpha_alphader_alphaderder} a) shows a graph of the value function $\alpha$ corresponding to the Slovak pension fund system. Following the data set from \cite{KMS}, the portfolio comprises of the stocks index with a high mean return $\mu_s=0.1028$ and high volatility $\sigma_s=0.169$ and bonds with mean return $\mu_b=0.0516$ and very low volatility $\sigma_s=0.0082$. Returns on stocks index and bonds have negative correlation $\varrho=-0.1151$. Hence, ${\bm{\mu}}=(\mu_s,\mu_b)^T$ and ${\bm{\Sigma}}_{11}=\sigma_2^2, {\bm{\Sigma}}_{22}=\sigma_b^2, {\bm{\Sigma}}_{12}={\bm{\Sigma}}_{21}=\varrho\sigma_s\sigma_b$. In Figure~\ref{fig:alpha_alphader_alphaderder} a), the  solid blue line corresponds to the convex compact decision set $\triangle=\{{\bm{\theta}}\in\mathbb{R}^2,\,  {\bm{\theta}} \ge0, \bm{1}^T {\bm{\theta}} =1\}$. The piece-wise linear value function $\alpha$ (dotted red line) corresponds to the discrete decision set $\hat\triangle=\{ {\bm{\theta}}^1, {\bm{\theta}}^2, {\bm{\theta}}^3\}\subset \triangle$. It represents the Slovak pension fund system consisting of three funds - the growth fund with ${\bm{\theta}}^1=(0.8,0.2)^T$ (80\% of stocks and 20\% of bonds), the balanced fund with  ${\bm{\theta}}^2=(0.5, 0.5)^T$ (equal proportion of stocks and bonds), and the conservative fund with ${\bm{\theta}}^3=(0, 1)^T$ (only bonds) (c.f. \cite{KMS}). Figure~\ref{fig:alpha_alphader_alphaderder} b) shows the graph of the second derivative $\alpha''_\varphi(\varphi)$ of the value function $\alpha(\varphi)$ corresponding to the convex compact decision set $\triangle$. The first point of discontinuity $\varphi_*^-$ is close to the value 2.
For $n>2$, the number of discontinuities of $\alpha_\varphi^{\prime\prime}$ increases (c.f. \cite{KilianovaSevcovicANZIAM}). In Figure ~\ref{fig:alpha_alphader_alphaderder-dax}, we show another example of the value function and its second derivative for the portfolio consisting of five stocks (BASF, Bayer, Degussa-Huls, FMC, Schering) entering DAX30 German stocks index. The covariance matrix ${\bm{\Sigma}}$ and the vector of yields ${\bm{\mu}}$ is taken from \cite{DDV}. 


\begin{figure}
    \centering
    \includegraphics[width=0.4\textwidth]{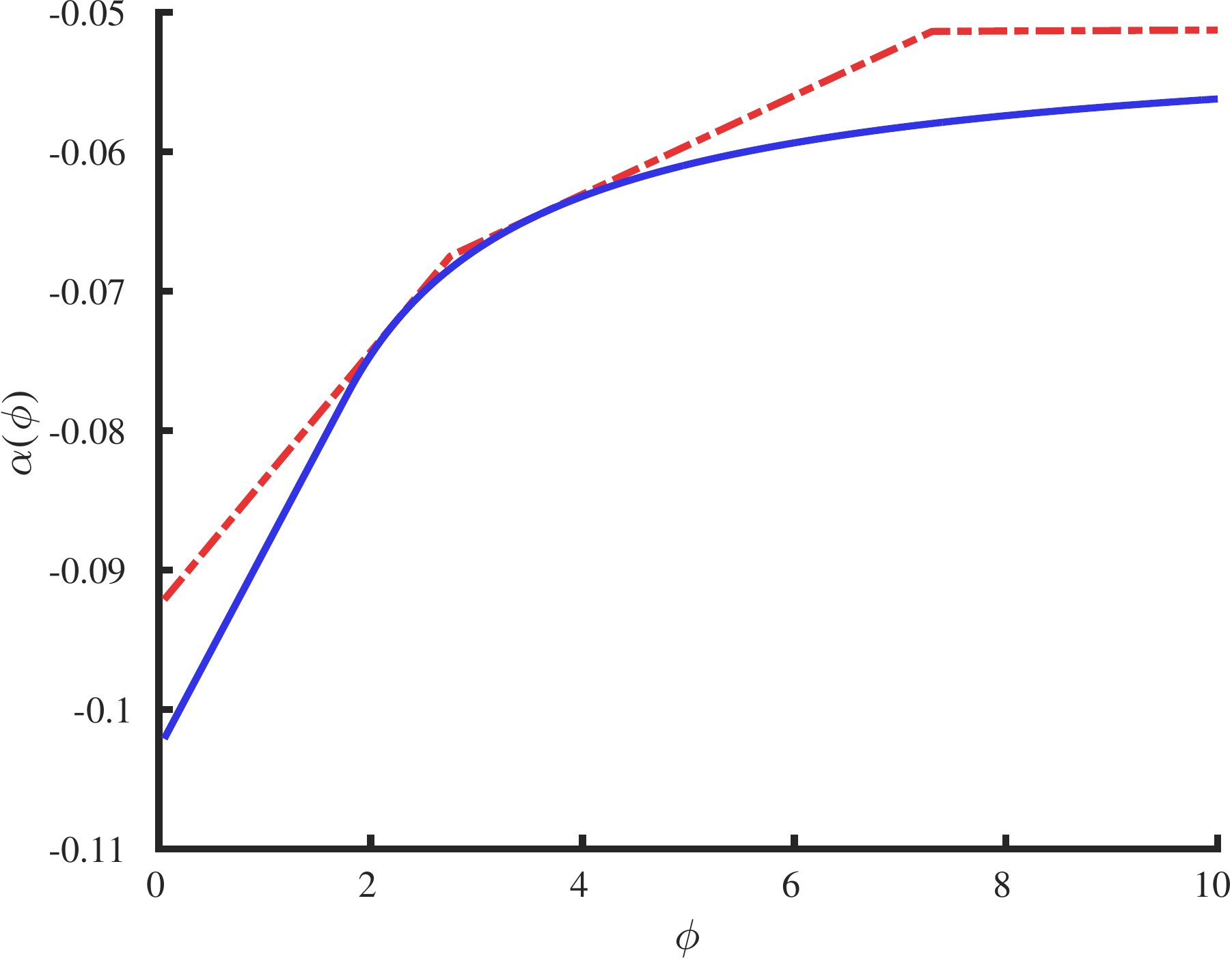}
    \includegraphics[width=0.4\textwidth]{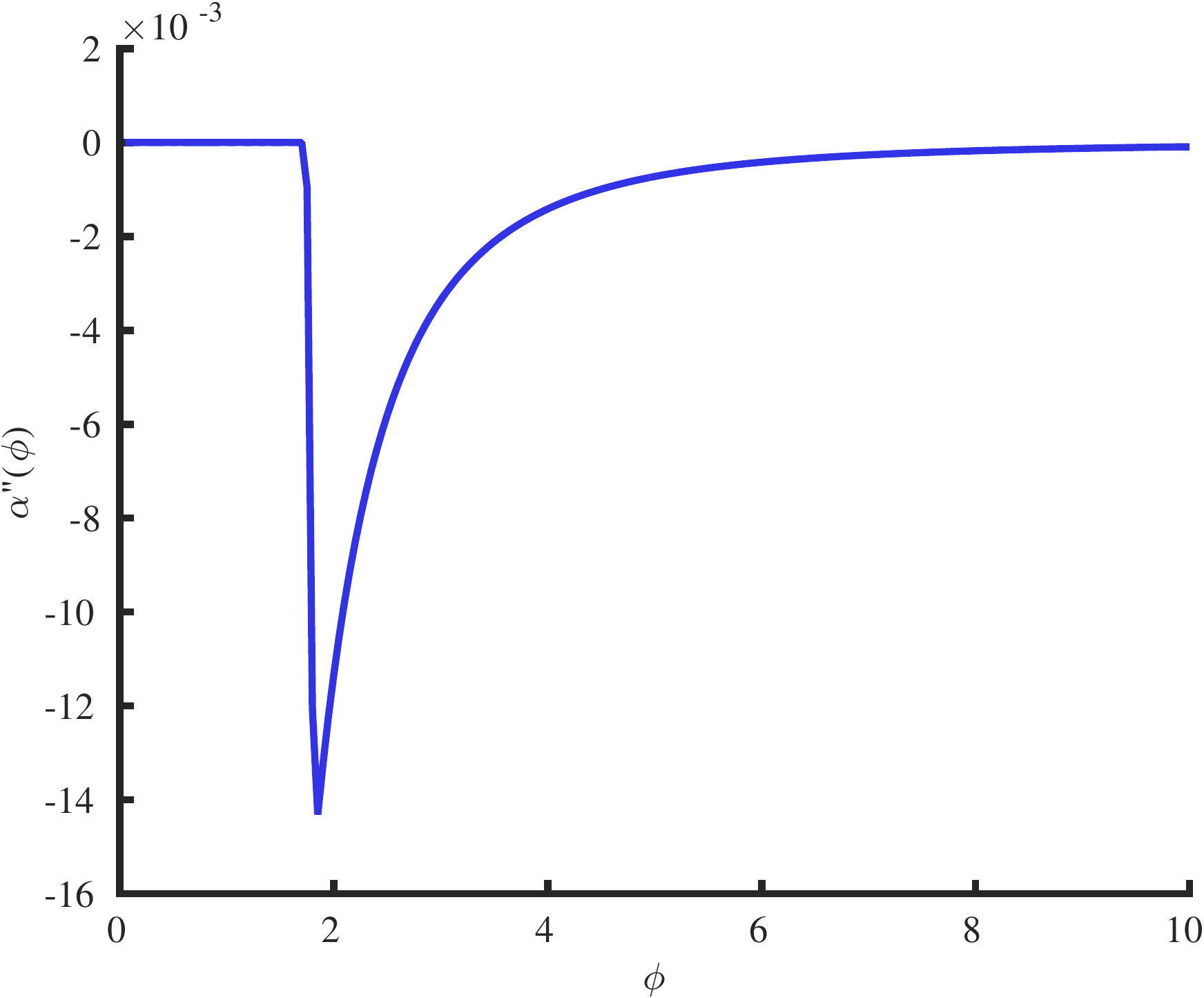}
    
    a) \hglue6truecm \qquad b)
    
    \caption{a) A graph of the value function $\alpha$, b) its second derivative $\alpha''(\varphi)$ for the portfolio consisting of the stocks index and bonds (c.f. \cite{KMS}) for the convex compact decision set $\triangle$. The dotted line in a) corresponds to the discrete decision set $\hat\triangle=\{ {\bm{\theta}}^1, {\bm{\theta}}^2, {\bm{\theta}}^3\}\subset \triangle$.}
    \label{fig:alpha_alphader_alphaderder}
\end{figure}

\begin{figure}
    \centering
    \includegraphics[width=0.4\textwidth]{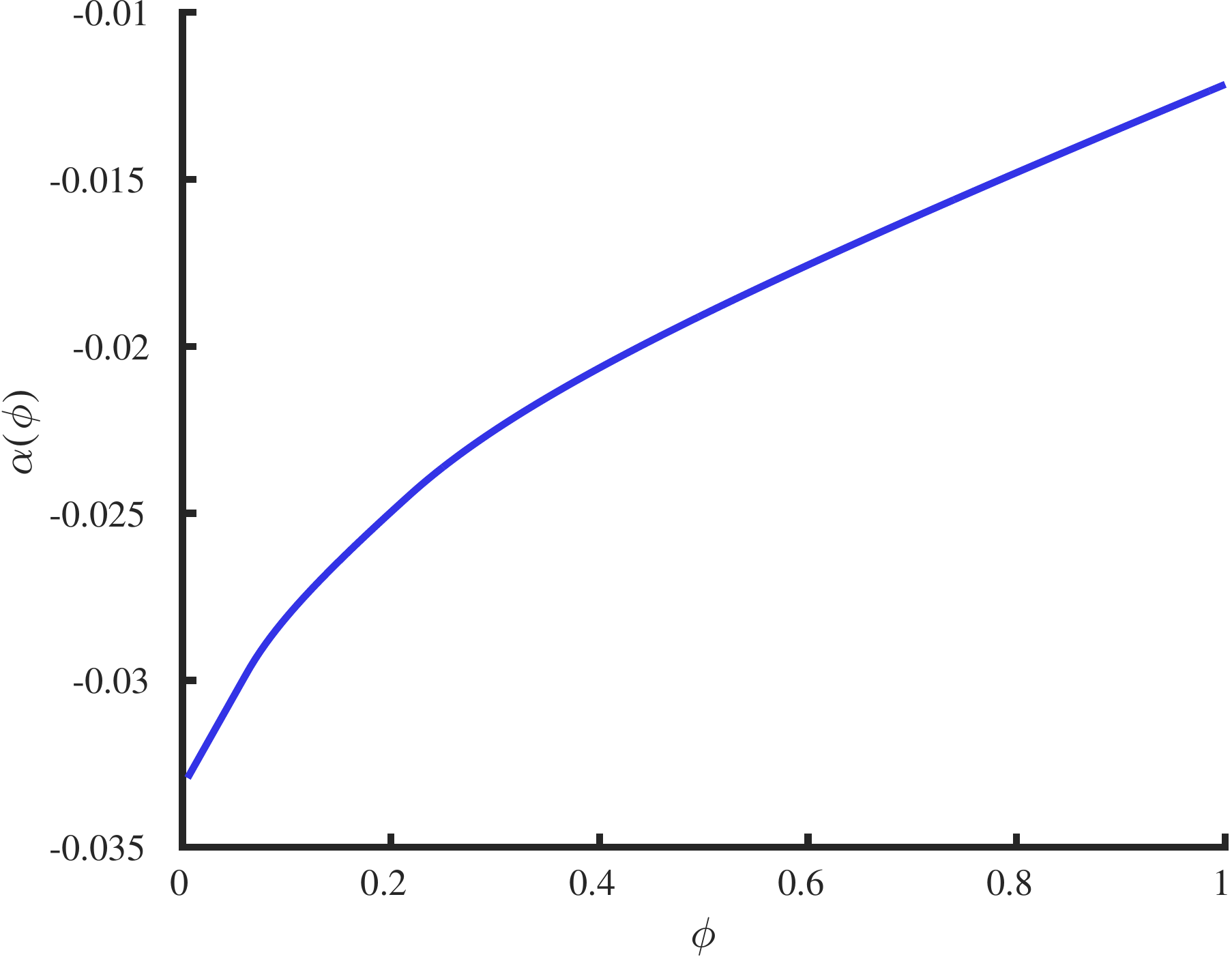}
    \includegraphics[width=0.4\textwidth]{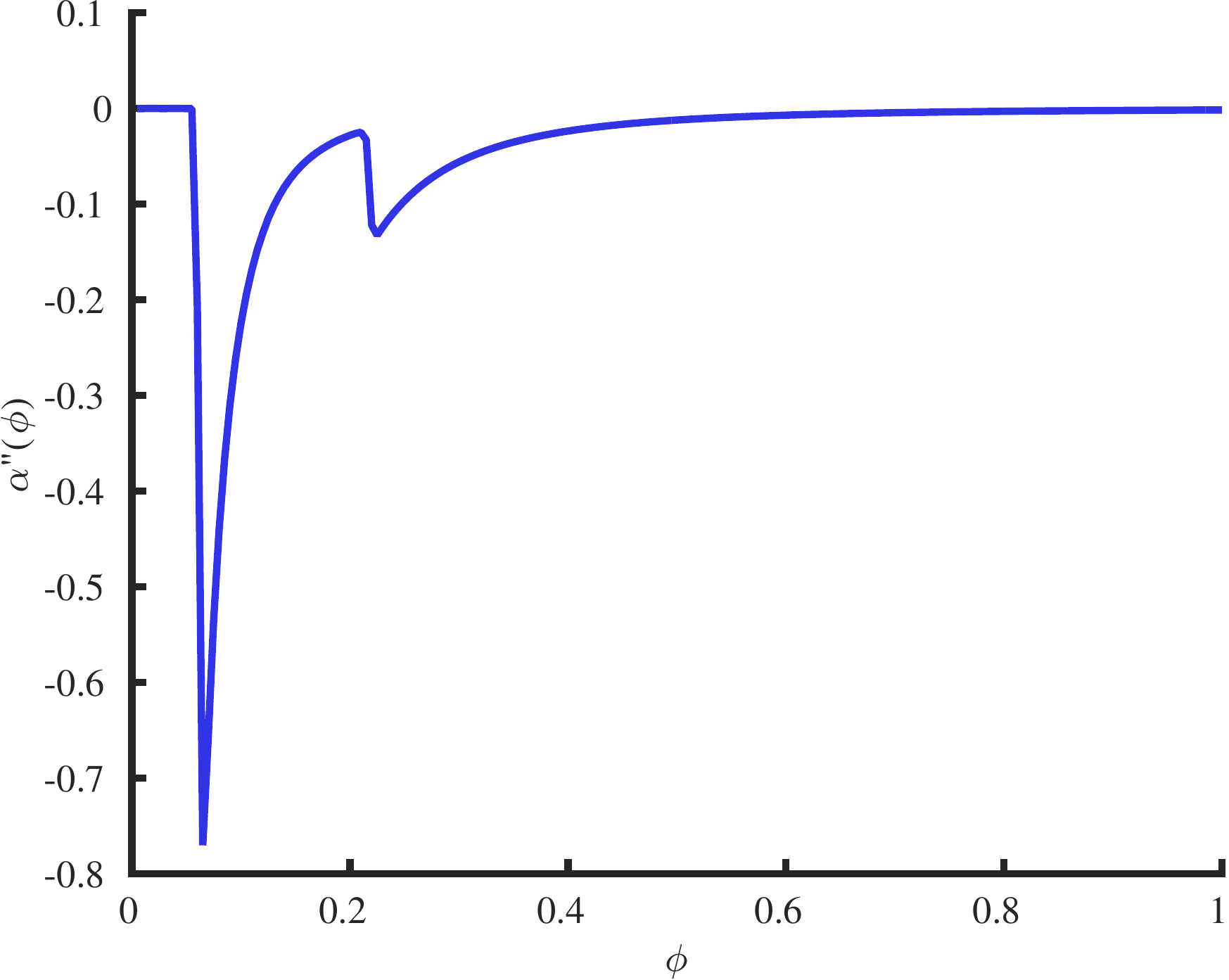}
    
    a) \hglue6truecm \qquad b)
    
    \caption{a) A graph of the value function $\alpha(\varphi)$, and b) the second derivative $\alpha''_\varphi(\varphi)$ corresponding to five stocks from DAX30 index.}
    \label{fig:alpha_alphader_alphaderder-dax}
\end{figure}

The advantage of the Riccati transformation of the original Hamilton-Jacobi-Bellman is twofold. First, the diffusion function $\alpha$ can be computed in advance as result of quadratic optimization problem when the vector $\bm{\mu}$ and the covariance matrix $\bm{\Sigma}$ are given or semidefinite programming problem when they belong to a uncertainity set of returns and covariance matrices (c.f. \cite{KilianovaTrnovska}). Figure \ref{fig:vysledky-DAX} shows the vector of optimal weights $\bm{\theta}$, as a function of the parameter $\varphi$, obtained as the optimal solution to the quadratic optimization problem with the covariance matrix corresponding to the entire  DAX30 index from the year 2017. When the parameter $\varphi$ increases there are more nontrivial weights $\theta_i$. The data set is the same as in the source: Kilianov\'a and \v{S}ev\v{c}ovi\v{c} \cite{KilianovaSevcovicKybernetika}.

\begin{figure}
    \centering
    \includegraphics[width=0.45\textwidth]{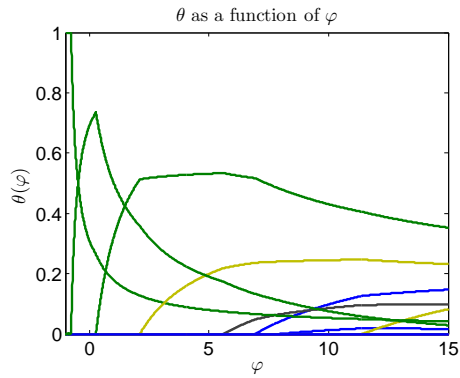} \\
    
    \caption{The optimal vector $\bm{\theta}=(\theta_1, \cdots,\theta_n)^T$ as a function of $\varphi$ for the German DAX30 index. }
    \label{fig:vysledky-DAX}
\end{figure}

Second, in contrast to the fully nonlinear character of the original Hamilton-Jacobi-Bellman equation (\ref{eq_HJB}), the transformed governing equation (\ref{eq_PDEphi-cons}) represents a quasi-linear parabolic equation in the divergence form. Hence efficient numerical schemes can be constructed for this class of equation. In our computational experiments, we follow the finite volume discretization scheme proposed and investigated by Kilianov\'a and \v{S}ev\v{c}ovi\v{c} \cite{KilianovaSevcovicKybernetika, KilianovaSevcovicANZIAM, KilianovaSevcovicJIAM}). In Figure~\ref{fig:vysledky1} a), we present results of time dependent sequence of profiles $\varphi(x,\tau)$ for a constant initial condition $\varphi_0\equiv 9$. In Figure~\ref{fig:vysledky1} b) we show profiles of solutions for the initial condition $\varphi_0$ attaining four decreasing values $\{9,8,7,6\}$. It represents DARA (decreasing absolute risk aversion) utility function. Clearly, these profiles are lower than those with constant $\varphi_0\equiv 9$. Therefore, the optimal vector $\bm{\theta}(x,\tau)$ contains more risky assets at any $x$ and time $\tau$ (see Figure~\ref{fig:vysledky-DAX}).  

Figure~\ref{fig:vysledky-DAX} also shows that there are only a few relevant assets out of the set of 30 assets
entering the DAX30 index. The figure also reveals the highest portion of Merck stocks (the first decreasing line in the plot) starting from $100\%$ representation in the optimal portfolio for very small values of $\varphi=\varphi(x,\tau)$. It corresponds to the early period of saving $\tau\approx 0$ and low account values of $x$. Although with the highest volatility, it is indeed reasonable to invest in an asset with the highest expected return when the account value $x$ is low, in early times of saving. We can also observe a fast decrement of the Merck weight for increasing risk averesion value $\varphi$. On the other hand, the Fresenius Medical (see Figure~\ref{fig:vysledky-DAX}, the yellow line) has the lowest volatility out of the considered five assets displayed in , and third-best mean return, which is reflected in its major representation in the portfolio for higher values $\varphi=\varphi(x,\tau)$.

\begin{figure}
\centering
\includegraphics[width=0.45\textwidth]{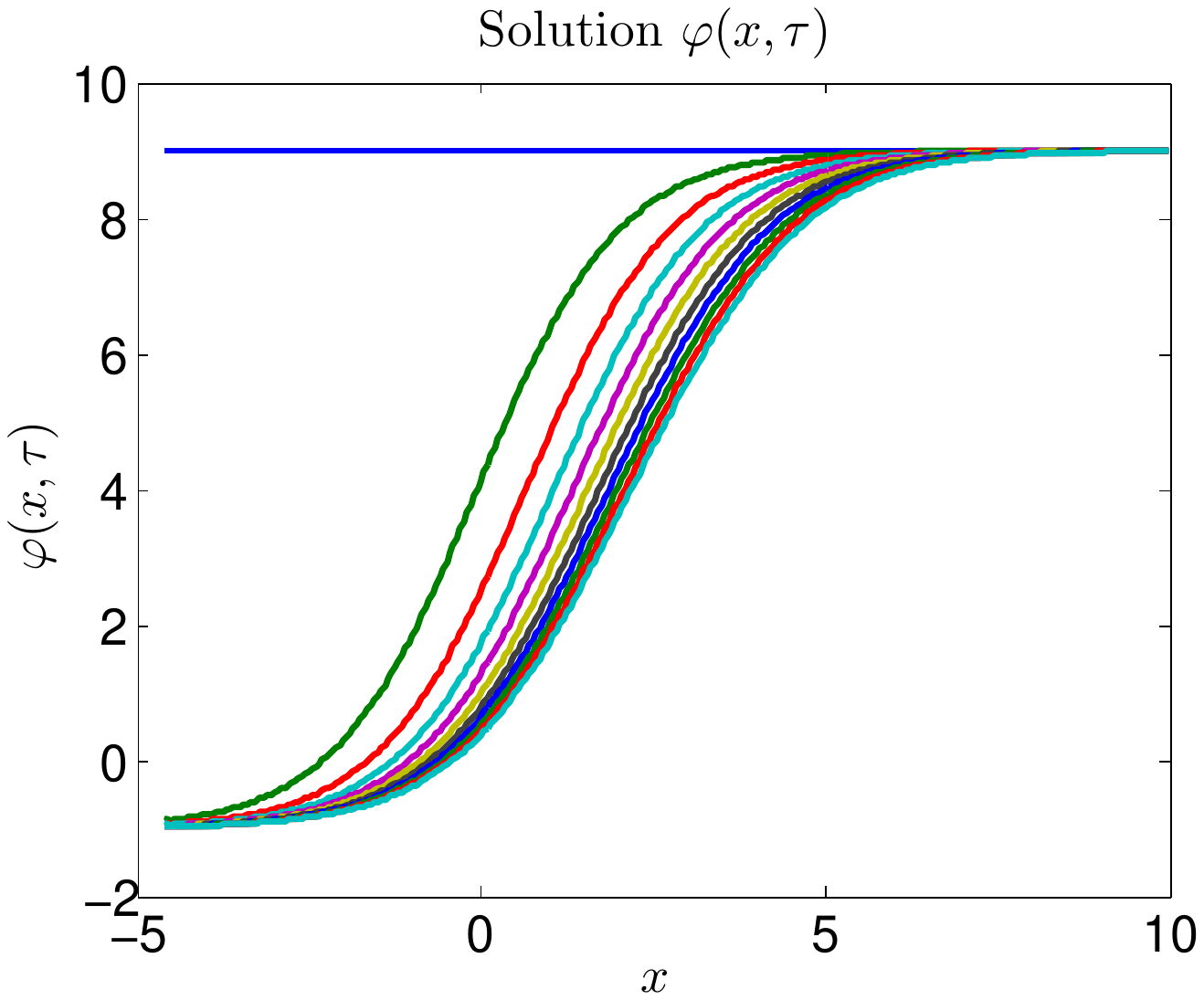} 
\includegraphics[width=0.45\textwidth]{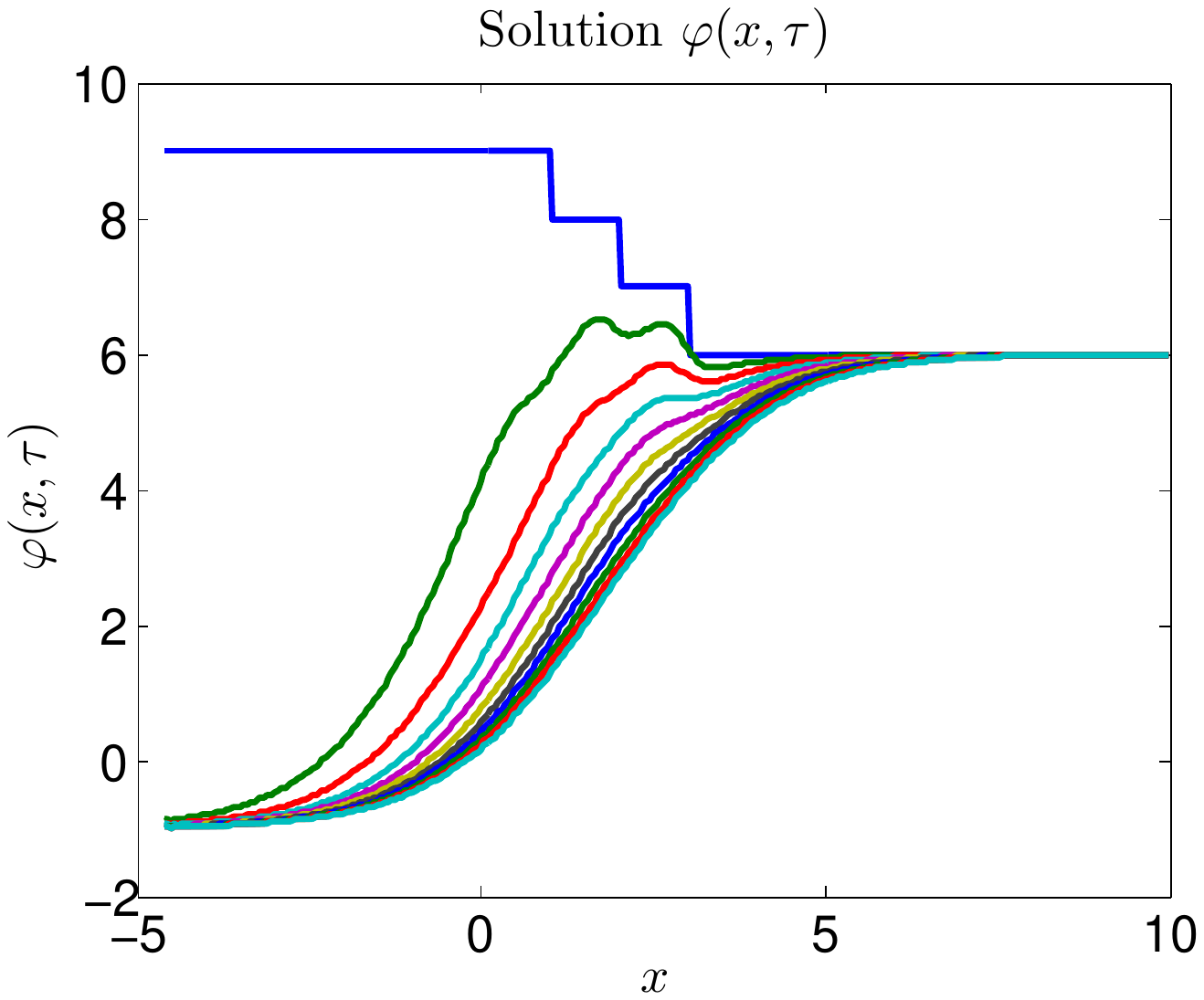}     

    \caption{Solutions $\varphi(x,\tau)$ for the utility function $u$ with constant $a_0=a_1=9$ (left) and for the DARA utility function with $a_0=9$, $a_1=6$, $x^\ast=2$ (right).}
    \label{fig:vysledky1}
\end{figure}

\section{Conclusions}
In this paper, we investigated and analyzed the existence and uniqueness of a solution to the Cauchy problem for the parabolic PDE (\ref{generalPDE}) in a suitable Sobolev space using monotone operator approach. We utilized the Banach's fixed point theorem and Fourier transform technique to prove the existence result in an abstract setting. As a financial application in one-dimensional space, we considered a fully nonlinear evolutionary Hamilton-Jacobi-Bellman (HJB) parabolic equation arising from portfolio optimization selection, where the goal is to maximize the conditional expected value of the terminal utility of the portfolio. Using the so-called Riccati method for transformation, the fully nonlinear HJB equation is transformed into a quasilinear parabolic equation, which is equivalent to the proposed result after some shift in the operator. Under some assumptions, we obtained that the diffusion function to the quasilinear parabolic equation is globally Lipschitz continuous, which is a crucial requirement for solving the Cauchy problem. Some numerical examples of the proposed results were presented.

\section*{Acknowledgements}
The authors were supported by VEGA 1/0062/18 (D\v{S}) and DAAD-MS MATTHIAS-2020 (CU) grants.

\end{document}